\newcommand{\vect}[1]{\mathbf{#1}} 
\begin{document}

\title{Encoder Circuit Optimization for Non-Binary Quantum Error Correction Codes in Prime Dimensions: An Algorithmic Framework}

\author{Aditya~Sodhani,~\IEEEmembership{Graduate Student Member,~IEEE},
        and~Keshab~K.~Parhi,~\IEEEmembership{Life Fellow,~IEEE}%
\thanks{A. Sodhani is with the University of Minnesota Twin Cities, Minneapolis, MN 55455 USA (e-mail: sodha005@umn.edu).}%
\thanks{K. K. Parhi is with the University of Minnesota Twin Cities, Minneapolis, MN 55455 USA (e-mail: parhi@umn.edu).}%
\thanks{This work was supported in part by the National Science Foundation under Grant CCF-2243053.}}

\maketitle

\begingroup\renewcommand\thefootnote{}\footnotetext{\emph{Preprint.}
This work has been submitted to the IEEE for possible publication in \emph{IEEE Transactions on Quantum Engineering}. Copyright may be transferred without notice, after which this version may no longer be accessible.}\addtocounter{footnote}{-1}\endgroup

\begin{abstract}
Quantum computers are a revolutionary class of computational platforms with applications in combinatorial and global optimization, machine learning, and other domains involving computationally hard problems. While these machines typically operate on qubits—quantum information elements that can occupy superpositions of the basis $|0\rangle$ and $|1\rangle$ states—recent advances have demonstrated the practical implementation of higher-dimensional quantum systems (qudits) across various hardware platforms. In these hardware realizations, the higher-order states are less stable and thus remain coherent for a shorter duration than the basis $|0\rangle$ and $|1\rangle$ states. Moreover, formal methods for designing efficient encoder circuits for these systems remain underexplored. This limitation motivates the development of efficient circuit techniques for qudit systems (d-level quantum systems). Previous works have typically established generating gate sets for higher-dimensional codes by generalizing the methods used for qubits. In this work, we introduce a systematic framework for optimizing encoder circuits for prime-dimension stabilizer codes. This framework is based on novel generating gate sets whose elements map directly to efficient Clifford gate sequences. We demonstrate the effectiveness of this method on key codes, achieving a 13--44\% reduction in encoder circuit gate count for the qutrit (d = 3) $[[9,5,3]]_3$, $[[5,1,3]]_3$, and $[[7,1,3]]_3$ codes, and a 9--21\% reduction for the ququint (d = 5) $[[10,6,3]]_5$ code when compared to prior work. We also achieved circuit depth reductions upto 42\%.
\end{abstract}

\begin{IEEEkeywords}
Quantum circuit optimization, non-binary codes, qudit stabilizer codes, Clifford group, symplectic transformations.
\end{IEEEkeywords}

\maketitle

\section{Introduction}
\label{sec:introduction}
Quantum Computing is a fast growing field of computation that has shown great potential to revolutionize problem solving using principles of quantum mechanics and thus achieve exponential speedups over its classical counterpart. Despite their many advantages, quantum computers are highly prone to errors and decoherence, which require robust error corrections \cite{reed2012realization}. The present Quantum Computing approach is primarily dominated by qubits, which operate on two distinct energy levels. However, there has been interest in its qudits (multi-level quantum systems) \cite{baker2020efficient} counterpart, especially qutrit, because of the various advantages it offers \cite{gokhale2019asymptotic, gokhale2020extending}. It is possible to achieve qudit-based quantum computing using photonic systems \cite{gao2020computer}, continuous spin systems \cite{bartlett2002quantum}, ion traps \cite{klimov2003qutrit}, nuclear magnetic resonance \cite{dogra2014determining}, and molecular magnets \cite{leuenberger2001quantum}. In addition, qudit systems expand the search space, which strengthens security protocols \cite{hajji2021qutrit} and increases channel capacity in the field of quantum cryptography \cite{bruss2002optimal,blok2021quantum} and communication. Qudit systems expand the state space and allow simultaneous control operations, which reduces circuit complexity, simplifies experimental setups, and improves algorithm efficiency \cite{wang2020qudits}. For example, recent researches \cite{ralph2007efficient,nikolaeva2022decomposing,yeh2022constructing} have shown that using qutrits (3-level quantum system) helps to reduce the circuit complexity while implementing important and complex quantum gates, such as the Toffoli and T gates.

Non-binary quantum systems enhance quantum error correction (QEC) by reducing code sizes, leading to effective encoding schemes to achieve higher error thresholds and fault tolerance which are necessary for reliable error correction. In the stabilizer formalism introduced by Gottesman \cite{gottesman1997stabilizer}, quantum codes are constructed by identifying a set of commuting operators whose joint eigenspaces serve as code spaces; the associated eigenvalues, or syndromes, facilitate the detection and correction of errors. For non-binary codes, Ashikhmin and Knill \cite{ashikhmin2001nonbinary} and later Ketkar et al. \cite{ketkar2006nonbinary} extended this framework to qudits with dimension \(d = p^k\), providing encoding procedures that involve projecting the all-zero state onto the code space and applying normalizer operations to generate the different codewords. Grassl \textit{et al.} \cite{grassl2003efficient} proposed an encoding method for non-binary quantum stabilizer codes by conjugating the stabilizers using Clifford operations, thereby transforming them into a canonical form. Nadkarni \textit{et al.} \cite{nadkarni2021encoding} generalized Grassl's procedure from codes based on classical linear codes to the much broader class of classical additive codes.

Although qudit-based quantum systems are promising, they present a significant practical challenge in hardware realizations, as higher-order energy levels are comparatively less stable \cite{sharma2024compilation}. These excited states are more susceptible to environmental noise and, therefore, remain coherent for a shorter period of time \cite{ogunkoya2024qutrit} as compared to lower states. Therefore, methods must be developed to operate efficiently within the small, coherent window. Thus, reducing circuit complexity, specifically by minimizing total gate counts and achieving lower circuit depths, becomes essential \cite{mondal2024optimization, mondal2024quantum} for implementing reliable qudit-based algorithms. This is the primary motivation of our research. Although existing works mentioned earlier provide procedures for designing encoding QECC circuits, they primarily utilize direct generalizations of qubit-based gates \cite{sawicki2017universality, sabo2024trellis}. These approaches, however, do not necessarily target an optimal generating gate set tailored to qudit systems. This motivates the need for a systematic framework for optimizing circuits, particularly encoder circuits for qudit codes, by developing gate sets that provide efficient quantum circuits in terms of the number of gates and circuit depth. 

The contributions of this paper are three-fold: 1)\ We develop an algorithmic framework to derive gate sets that yield an optimal gate count and reduced circuit depth for encoder design, improving upon previous gate sets used for general d-level quantum systems. 2)\ Using this framework, we identify the optimal gate sets for d=3 (qutrit) and d=5 (ququint) systems. We also provide a formal proof for the operators corresponding to the gates for the d=3 case.
3)\ We systematically derive encoder circuits for various quantum error-correcting codes for d=3 and d=5 systems. We also illustrate the optimality of the proposed approach.

This paper is organized as follows. In Section II, we discuss concepts of qudit-based quantum systems, including state representation, the generalized Pauli group, the stabilizer formalism, and Clifford operations. In Section III, we discuss the general encoding procedure. In Section IV, we discuss our proposed algorithmic framework for finding optimal gate sets. In Section V, we use the framework to find optimal generating gate sets for various non-binary quantum codes. In Section VI, we provide the operator derivations corresponding to the gate sets derived from the proposed algorithmic framework. Finally, conclusions are provided in Section VII.

\section{Quantum Systems over qudits}

\subsection{State Representation}
A qudit is the quantum counterpart of a classical d-ary system. Its state is represented by a normalized vector in a d-dimensional Hilbert space, which is a complex vector space equipped with an inner product. To illustrate, consider the most basic example: a 2-ary classical system, in which one digit is represented by either 0 or 1. Then the state of its quantum counterpart (qubit or qudit for d = 2) is represented as state vector with orthonormal basis vectors as \(|0\rangle\) and \(|1\rangle\). Thus, the qubit state  is given by: 

\begin{equation}
|\phi\rangle = \alpha |0\rangle + \beta |1\rangle
\end{equation}
where \(\alpha, \beta \in \mathbb{C}\) and the normalization condition \(|\alpha|^2 + |\beta|^2 = 1\) must hold true.

More generally, the state of a qudit is a vector in the complex vector space spanned by a set of d orthonormal basis vectors, which are labeled by either \{\(|0\rangle, |1\rangle, \dots, |d-1\rangle\)\} or by the elements of a finite field. Thus, the general state of a qudit $|\psi\rangle$ is given by : 
\begin{equation}
|\psi\rangle = \alpha_0 |0\rangle + \alpha_1 |1\rangle + \cdots + \alpha_{d-1} |d-1\rangle
\end{equation}
where the normalization condition is \(\sum_{i=0}^{d-1} |\alpha_i|^2 = 1\).

\subsection{Qudit Pauli Group}
For qubits, the Pauli group comprises the identity operator $I$ and the Pauli matrices $\sigma_x$, $\sigma_y$, and $\sigma_z$ with phase factors {±1,±i}:

\[
\sigma_x = 
\begin{pmatrix}
0 & 1 \\
1 & 0
\end{pmatrix}, \quad
\sigma_y = 
\begin{pmatrix}
0 & -i \\
i & 0
\end{pmatrix}, \quad
\sigma_z = 
\begin{pmatrix}
1 & 0 \\
0 & -1
\end{pmatrix}.
\]
While generalizing to qudits, we consider systems where every qudit is a state in d-dimensional Hilbert space where d =  \( p^m \), i.e., dimension d is power of prime p. 
With this, we define the generalized Pauli operators $X(a)$ and $Z(b)$ for a single qudit:
\[
X(a) := \sum_{x \in \mathbb{F}_d} |x + a\rangle\langle x|, \quad Z(b) := \sum_{z \in \mathbb{F}_d} \omega^{\text{tr}(b z)} |z\rangle\langle z|
\]
where $a, b \in \mathbb{F}_d$, \(\text{tr}(\cdot)\) denotes the trace operation from \( \mathbb{F}_d \) to \( \mathbb{F}_p \) which is defined as:
\begin{equation}
\text{tr}(x) = \sum_{i=0}^{m-1} x^{p^i} .
\end{equation}

For an \( n \)-qudit system, a Pauli product is defined as: \( \omega^c X(\mathbf{a}) Z(\mathbf{b}) \). Here \( X(\mathbf{a}) = X(a_1) \otimes \dots \otimes X(a_n) \) and \( Z(\mathbf{b}) = Z(b_1) \otimes \dots \otimes Z(b_n) \), where \(\mathbf{a}, \mathbf{b} \in \mathbb{F}_d^n\), i.e., \( \mathbf{a} = (a_1, \dots, a_n) \), \( \mathbf{b} = (b_1, \dots, b_n) \), \( a_i, b_i \in \mathbb{F}_d \) , and \(c \in \mathbb{F}_p\).

Thus, the \(d\)-dimensional Pauli group \(\mathcal{P}_{n}^{(d)}\) is defined by 
all possible Pauli products over \(n\)-qudits. For instance, consider the Pauli group of a \textbf{single qutrit} ($n=1, ~d=3$). Since $d=3$ is prime, $p=3$ and $m=1$, so the trace function is simply the identity, $\text{tr}(x) = x$. The operators have the form
\[
  \omega^{\lambda}\,X(i)\,Z(j),
  \quad
  \text{where}
  \quad
  \lambda,\,i,\,j \in \mathbb{F}_{3}
  \quad\text{and}\quad
  \omega \;=\; e^{2\pi i/3}.
\]
Here, each of the nine unphased operators \(X(i)\,Z(j)\) (with \(i,j \in \{0,1,2\}\)) 
can be multiplied by any of the three overall phases \(\omega^\lambda\). 
This yields a total of
\[
  3~(\text{phases})
  \;\times\;
  3~(\text{values of }i)
  \;\times\;
  3~(\text{values of }j)
  \;=\;
  27
\]
elements in the single‐qutrit Pauli group. Specifically, the nine unphased 
operators are given by:
\[
  \bigl\{
    I,\;
    Z(1),\;
    Z(2),\;
    X(1),\;
    X(1)\,Z(1),\;
\]
\[
    X(1)\,Z(2),\;
    X(2),\;
    X(2)\,Z(1),\;
    X(2)\,Z(2)
  \bigr\}.
\]
Each of these nine operators appears with each of the three phases $\omega^\lambda$ (where $\lambda \in \{0, 1, 2\}$) to form the full group.

\subsection{Nice Error Basis and Error Group}
For a \textbf{single qudit} system ($n=1$), consider the following set of $d^2$ unitary operators, which form the basis for single-qudit errors:
\begin{equation}
    \mathcal{E} = \{ X(a) Z(b) : a, b \in \mathbb{F}_d \}.
\end{equation}

It can be shown \cite{ashikhmin2001nonbinary} that these \( d^2 \) operators form an orthogonal operator basis with respect to the inner product $\langle A, B \rangle = \text{tr}(A^\dagger B)$. This set is called a "nice error basis" and has the following properties:
\begin{enumerate}
    \item It contains the identity operator (for $a=b=0$).
    \item It is orthogonal: $\text{tr}(E_1^\dagger E_2) = d \cdot \delta_{E_1, E_2}$ for all $E_1, E_2 \in \mathcal{E}$.
    \item The product of any two basis elements is, up to a phase, another basis element: for any $E_1, E_2 \in \mathcal{E}$, their product $E_1 E_2 = c E_3$ for some $E_3 \in \mathcal{E}$ and a phase factor $c$.
    \item The operators satisfy the commutation relation: $Z(b)X(a) = \omega^{\text{tr}(ab)} X(a)Z(b)$.
    \item They follow the product rule: $X(a)Z(b) \cdot X(a')Z(b') = \omega^{\text{tr}(ba')} X(a+a')Z(b+b')$.
\end{enumerate}
Similar to the qubit Pauli group, we can define the n-qudit \emph{error group} $G_n$ as the set of all Pauli products including phase factors:
\begin{equation}
G_n = \left\{ \omega^c X(\mathbf{a}) Z(\mathbf{b}) \;\middle|\; \mathbf{a}, \mathbf{b} \in \mathbb{F}_d^n, \; c \in \mathbb{F}_p \right\} 
\end{equation}

For example, two errors $E_1 = \omega^{c_1} X(\mathbf{a_1}) Z(\mathbf{b_1})$ and $E_2 = \omega^{c_2} X(\mathbf{a_2}) Z(\mathbf{b_2})$ in $G_n$ commute if and only if their trace symplectic product vanishes, i.e.,
\begin{equation}
    \text{tr}(\mathbf{a_1} \mathbf{b_2} - \mathbf{a_2} \mathbf{b_1}) = 0.
\end{equation}
Using the product rule from Property (5), it is straightforward to verify the product of these two errors:
\begin{align}
    E_1 E_2 &= \omega^{c_1+c_2} \omega^{\text{tr}(\mathbf{b_1} \mathbf{a_2})} X(\mathbf{a_1} + \mathbf{a_2}) Z(\mathbf{b_1} + \mathbf{b_2}) \\
    E_2 E_1 &= \omega^{c_1+c_2} \omega^{\text{tr}(\mathbf{b_2} \mathbf{a_1})} X(\mathbf{a_1} + \mathbf{a_2}) Z(\mathbf{b_1} + \mathbf{b_2})
\end{align}
Clearly, $E_1 E_2 = E_2 E_1$ only when $\omega^{\text{tr}(\mathbf{b_1} \mathbf{a_2})} = \omega^{\text{tr}(\mathbf{b_2} \mathbf{a_1})}$, which is equivalent to the condition $\text{tr}(\mathbf{a_1} \mathbf{b_2} - \mathbf{a_2}\mathbf{b_1}) = 0 \pmod p$.

\subsection{Stabilizer Codes}
Let $S$ be an Abelian subgroup of the error group $G_n$ such that $-I \notin S$. 
We call $S$ the \textbf{stabilizer group}.

An $[[n,k]]$ non-binary stabilizer code, $C$, is the $d^k$-dimensional 
subspace of the $n$-qudit Hilbert space, $\mathcal{H}_d^n$, that is 
stabilized by every element of $S$. In other words, the code $C$ is the 
simultaneous +1 eigenspace of all operators in $S$. This is expressed as:
\begin{equation}
    C = \left\{ |\psi\rangle \in \mathcal{H}_d^n \;\middle|\; s|\psi\rangle = |\psi\rangle 
    \text{ for all } s \in S \right\}.
\end{equation}

\subsection{Symplectic Inner Product}
The n-qudit Pauli group $\mathcal{P}_n^{(d)}$ has a natural classical representation. 
The phase-space vector corresponding to an operator $X(\mathbf{a})Z(\mathbf{b})$ 
is the $2n$-dimensional vector $(\mathbf{a}, \mathbf{b})$ over $\mathbb{F}_d$. 
The set of all such vectors forms a \textbf{symplectic module}, which is a 
$2n$-dimensional module over the ring $\mathbb{Z}_d$.

This module is equipped with a \textbf{symplectic inner product (SIP)}, a 
non-degenerate, skew-symmetric bilinear form defined as:
\begin{equation}
    \langle (\mathbf{a}_1, \mathbf{b}_1), (\mathbf{a}_2, \mathbf{b}_2) \rangle 
    := \text{tr}(\mathbf{a}_1 \cdot \mathbf{b}_2 - \mathbf{a}_2 \cdot \mathbf{b}_1)
\end{equation}
where the trace is taken from $\mathbb{F}_d$ to its prime subfield $\mathbb{F}_p$.

The SIP's importance is that it directly determines the commutation relations 
of the Pauli operators. Two operators, $E_1 = X(\mathbf{a}_1)Z(\mathbf{b}_1)$ and 
$E_2 = X(\mathbf{a}_2)Z(\mathbf{b}_2)$, have the following commutation relation:
\begin{equation}
    E_1 E_2 = \omega^{\langle (\mathbf{a}_1, \mathbf{b}_1), (\mathbf{a}_2, \mathbf{b}_2) \rangle} E_2 E_1
\end{equation}
The operators commute if and only if their symplectic inner product is zero.

\subsection{Clifford Operators and the Symplectic Form}
Consider a single qudit of dimension $d$ with Pauli group, $\mathcal{P}_1$, 
generated by the $X$ and $Z$ operators, which are of order $d$. As discussed 
previously, any element of the Pauli group (up to overall phases) can be 
written as $X(a)Z(b)$.

A unitary operator $U$ belongs to the \textbf{Clifford group}, $\mathcal{C}$, 
if it normalizes the Pauli group; that is, it maps elements of the Pauli 
group to other elements under conjugation:
\[
    U(X(a)Z(b))U^{-1} \in \mathcal{P}_1.
\]
In the classical representation for an $n$-qudit system, these operators 
correspond to $2n \times 2n$ matrices over $\mathbb{Z}_d$ that preserve the 
symplectic inner product (SIP). A key tool in this description is the 
\textbf{standard symplectic form}, given by the block matrix:
\begin{equation}
    S =
    \begin{pmatrix}
        0 & I_n \\
        -I_n & 0
    \end{pmatrix},
\end{equation}
where $I_n$ is the $n \times n$ identity matrix. A matrix $N$ is said to be 
\textbf{symplectic} if it satisfies:
\begin{equation}
    N^T S N = S.
\end{equation}
This condition ensures that the SIP is preserved under $N$. For $N$ to correspond 
to a valid Clifford operator, it must also satisfy the \textbf{determinant condition}:
\begin{equation}
    \det(N) \equiv 1 \pmod d.
\end{equation}
By encoding the commutation relations of quantum operators into these 
symplectic matrices, one obtains a concise and powerful way to analyze 
Clifford gates.

For example, consider the unitary $U$ that performs the following transformation:
\[
    U X(a) U^{-1} = Z(-a), \quad U Z(b) U^{-1} = X(b).
\]
This operator is in the Clifford group, $\mathcal{C}$. Its action on the 
phase-space vector $(a, b)$ is given by the matrix:
\begin{equation}
    F := \begin{pmatrix}
        0 & 1 \\
        -1 & 0
    \end{pmatrix}.
\end{equation}
To verify that $F$ represents a valid Clifford operator, we must check that 
it satisfies \textbf{both} conditions. We see that $\det(F) = (0)(0) - (1)(-1) = 1$, 
which satisfies the determinant condition. We must also separately verify that 
it is symplectic:
\begin{align}
    F^T S F 
        &= \begin{pmatrix} 0 & -1 \\ 1 & 0 \end{pmatrix} 
           \begin{pmatrix} 0 & 1 \\ -1 & 0 \end{pmatrix} 
           \begin{pmatrix} 0 & 1 \\ -1 & 0 \end{pmatrix} \nonumber \\
        &= \begin{pmatrix} 1 & 0 \\ 0 & 1 \end{pmatrix} 
           \begin{pmatrix} 0 & 1 \\ -1 & 0 \end{pmatrix} 
         = S.
\end{align}
Since both conditions are met, $F$ is a valid representation of a Clifford operator.

\subsection{Known Clifford Operators and Their Symplectic Transformations}
\subsubsection{Fourier Transform Operator (DFT)}
The DFT over a finite field \(\mathbb{F}_d\) is given by:

\begin{equation}
    \text{DFT}_d := \frac{1}{\sqrt{d}} \sum_{x,z \in \mathbb{F}_d} 
    \omega^{\operatorname{tr}(xz)} |z\rangle \langle x|
\end{equation}

where:
\begin{itemize}
    \item \( \mathbb{F}_d \) is the finite field for the qudit of dimension \( d = p^m \).
    \item \( \omega = e^{2\pi i/p} \) is a primitive \( p \)-th root of unity, where $p$ is the characteristic of the field.
    \item \( \operatorname{tr}(\cdot) \) is the field trace from \(\mathbb{F}_d\) to its prime subfield \(\mathbb{F}_p\).
\end{itemize}

For the specific case of qutrits, where \( d=3 \), the DFT can be simplified. Since $\mathbb{F}_3$ is a prime field ($p=3, m=1$), the trace function is the identity ($\operatorname{tr}(x) = x$). The DFT matrix is defined as:
\begin{equation}
    \text{DFT}_3 := \frac{1}{\sqrt{3}} \sum_{x,z \in \mathbb{F}_3} \omega^{xz} |z\rangle \langle x|
\end{equation}
where the sum is over elements \( x, z \in \{0,1,2\} \) and \( \omega = e^{2\pi i /3} \). The matrix form of \( \text{DFT}_3 \) in the computational basis \( \{ |0\rangle, |1\rangle, |2\rangle \} \) is given by:
\begin{equation}
    \text{DFT}_3 =
    \frac{1}{\sqrt{3}}
    \begin{bmatrix}
        1 & 1 & 1 \\
        1 & \omega & \omega^2 \\
        1 & \omega^2 & \omega
    \end{bmatrix}
\end{equation}

As discussed previously, the action of the DFT on the phase-space vector $(a, b)$ that represents a Pauli operator is a linear transformation. This symplectic representation of the DFT is given by:
\begin{equation}
    \overline{\text{DFT}} = 
    \begin{bmatrix}
        0 & -1 \\
        1 & 0
    \end{bmatrix}.
\end{equation}

\subsubsection{Multiplication Operator}
For a general qudit system of dimension $d$, and any nonzero element 
$\gamma \in \mathbb{F}_d^*$, the \emph{multiplication operator} is defined as:
\begin{equation}
    M_\gamma := \sum_{y \in \mathbb{F}_d} |\gamma y\rangle \langle y|
\end{equation}

When $d=3$, we work in the prime field $\mathbb{F}_3 = \{0,1,2\}$. For 
$\gamma \in \{1,2\}$, the operator $M_\gamma$ acts on the qutrit basis 
$\{|0\rangle, |1\rangle, |2\rangle\}$ by permuting the basis vectors. 
In the computational basis, $M_\gamma$ is a $3 \times 3$ permutation matrix 
whose $(z,y)$-th entry is given by:
\begin{equation}
    (M_\gamma)_{z,y} =
    \begin{cases}
        1, & \text{if } z \equiv \gamma y \pmod{3},\\
        0, & \text{otherwise}.
    \end{cases}
\end{equation}
Concretely:
\begin{equation}
    M_1 =
    \begin{pmatrix}
        1 & 0 & 0\\
        0 & 1 & 0\\
        0 & 0 & 1
    \end{pmatrix}
    \quad \text{and} \quad
    M_2 =
    \begin{pmatrix}
        1 & 0 & 0\\
        0 & 0 & 1\\
        0 & 1 & 0
    \end{pmatrix}.
\end{equation}
Here, $M_1$ is the identity, and $M_2$ swaps $|1\rangle \leftrightarrow |2\rangle$ 
while fixing $|0\rangle$.

The symplectic (phase-space) representation of $M_\gamma$ is a $2 \times 2$ 
matrix that describes its action on the phase-space vector $(a,b)$. 
The transformation is:
\begin{equation}
    M_\gamma \longmapsto
    \overline{M}_\gamma =
    \begin{pmatrix}
        \gamma^{-1} & 0\\
        0 & \gamma
    \end{pmatrix}.
\end{equation}
Hence, $\gamma^{-1}$ scales the X-coordinate ($a$), and $\gamma$ scales the 
Z-coordinate ($b$). For qutrits, when $\gamma=2$, its inverse $\gamma^{-1}$ 
is also 2, since $2 \cdot 2 \equiv 1 \pmod{3}$. Thus, the symplectic matrix 
for $M_2$ is:
\[
    \overline{M}_2 = \begin{pmatrix} 2 & 0\\ 0 & 2 \end{pmatrix}.
\]

\subsubsection{$P_\gamma$ (Quadratic Phase) Operator}

Let $d$ be an odd prime and $\omega := e^{2\pi i/d}$. For any $\gamma \in \mathbb{F}_d$, define
\begin{equation}
    P_\gamma \;=\; \sum_{y\in\mathbb{F}_d} \omega^{-\frac{1}{2}\gamma y^2}\,\ket{y}\!\bra{y},
\end{equation}
where $\tfrac{1}{2}$ denotes the multiplicative inverse of $2$ in $\mathbb{F}_d$; all arithmetic is modulo $d$. (For $d=2$, a different definition is required and is not used here.)

Ignoring overall phases, the conjugation relations are given by:
\begin{align}
    P_\gamma^{-1} X_\alpha P_\gamma &= \omega^{\frac{1}{2}\gamma \alpha^2}\, X_\alpha Z_{\gamma\alpha},\\
    P_\gamma^{-1} Z_\beta P_\gamma  &= Z_\beta,
\end{align}
for $\alpha,\beta\in\mathbb{F}_d$.

Using phase–space \emph{row} vectors with right multiplication, $(a,b)\in\mathbb{F}_d^2$ update is represented by:
\begin{equation}
    (a,b)\ \longmapsto\ (a,\ b+\gamma a),
\end{equation}
which is represented by the symplectic matrix
\begin{equation}
    \overline{P}_\gamma \;=\;
    \begin{pmatrix}
        1 & \gamma \\
        0 & 1
    \end{pmatrix},
    \qquad (a,b)\,\overline{P}_\gamma=(a,\ b+\gamma a).
\end{equation}

For the qutrit case $d=3$ and $\gamma=1$,
\[
    P_1=\mathrm{diag}\!\bigl(1,\ \omega^{-\frac{1}{2}},\ \omega^{-2}\bigr),
    \qquad
    \overline{P}_1=\begin{pmatrix}1&1\\0&1\end{pmatrix},
\]
so $(a,b)\mapsto(a,b+a)$ is consistent with the above relations.

\subsubsection{Addition Operator (Two-Qudit Operator)}

Consider two qudits, each of dimension $d$, with computational bases 
$\{|x\rangle_1 : x \in \mathbb{F}_d\}$ and $\{|y\rangle_2 : y \in \mathbb{F}_d\}$. 
The \textbf{ADD operator} is defined by its action on the basis states:
\begin{equation}
    \mathrm{ADD}^{(1,2)} |x\rangle_1 |y\rangle_2 = |x\rangle_1 |x + y\rangle_2,
\end{equation}
with all arithmetic modulo $d$.
In the qutrit case ($d=3$), this operator is a $9 \times 9$ permutation matrix. 
For example, the basis state $|1,2\rangle$ is mapped to 
$|1, 1+2 \pmod 3\rangle = |1,0\rangle$.

The action of this gate on the phase-space vector $(a_1, a_2, b_1, b_2)$ is also 
well-defined. Ignoring overall phases (standard in symplectic representations), the ADD operator (equivalent to a CNOT gate) under right-multiplication corresponds to the  following linear transformation:
\[
    (a_1, a_2, b_1, b_2) \longmapsto (a_1, a_2 + a_1, b_1 - b_2, b_2)
\]
This transformation is represented by the following $4 \times 4$ symplectic matrix 
(assuming row vectors and right-multiplication):
\begin{equation}
    \overline{\mathrm{ADD}}^{(1,2)} =
    \begin{pmatrix}
        1 & 1 & 0 & 0 \\
        0 & 1 & 0 & 0 \\
        0 & 0 & 1 & 0 \\
        0 & 0 & -1 & 1
    \end{pmatrix}.
\end{equation}
In effect, this `ADD` operation adds the X-component of the first qudit to the 
X-component of the second, and it subtracts the Z-component of the second qudit 
from the Z-component of the first.

\subsubsection{SWAP Operator (Two-Qudit Operator)}

The \textbf{SWAP} gate exchanges two $d$-level systems:
\begin{equation}
    \mathrm{SWAP}^{(1,2)} \ket{x}_1\ket{y}_2 = \ket{y}_1\ket{x}_2,
\end{equation}
and can be written as
\begin{equation}
    \mathrm{SWAP}_d \;=\; \sum_{x,y\in\mathbb{F}_p} 
    \ket{x}_d\!\bra{y}_d \,\otimes\, \ket{y}_d\!\bra{x}_d .
\end{equation}

With the convention of phase-space \emph{row} vectors ordered as $(a_1,a_2,b_1,b_2)$ and right-multiplication (all arithmetic modulo $d$), SWAP induces
\begin{equation}
    (a_1,a_2,b_1,b_2)\ \longmapsto\ (a_2,\,a_1,\,b_2,\,b_1),
\end{equation}
i.e., it swaps both the $X$- and $Z$-components of qudits~1 and~2.  
Equivalently, its $4\times 4$ symplectic matrix is given by:
\begin{equation}
    \overline{\mathrm{SWAP}}^{(1,2)} \;=\;
    \begin{pmatrix}
        0 & 1 & 0 & 0 \\
        1 & 0 & 0 & 0 \\
        0 & 0 & 0 & 1 \\
        0 & 0 & 1 & 0
    \end{pmatrix},
\end{equation}
so that
\[
(a_1,a_2,b_1,b_2)\,\overline{\mathrm{SWAP}}^{(1,2)}
= (a_2,a_1,b_2,b_1)\, .
\]

\section{Encoding Algorithm}
It is necessary to know that the single-qudit gates (mentioned in section II) $\{\mathrm{DFT},\,M_{\gamma},\,P_{\gamma}\}$ form a generating set for the (single-qudit) Clifford group in prime dimension $d$~\cite{grassl2003efficient}.
We will modify Grassl's decoding procedure \cite{grassl2003efficient} to accommodate the encoding of quantum codes over prime dimension d. This has been proposed by Nadkarni \textit{et al.} \cite{nadkarni2021encoding}.
For each stabilizer generator $S_i = X(\mathbf{a}_i)Z(\mathbf{b}_i)$, we define its 
classical representation as the vector $\mathbf{h}_i = (\mathbf{a}_i | \mathbf{b}_i)$. 
The check matrix $\mathcal{H}_{(X|Z)}$ is then constructed from these vectors as its rows. Given these, the encoding steps are: 

\textit{Step 1:} For the first row of the check matrix $\mathcal{H}_{(X|Z)}$, using the operators and their symplectic transformations defined in the generating gate set, transform the Pauli operators (except $(0|0)$) over each qudits to $(1|0)$. Let this operation be $T_{1}$. Apply $T_{1}$ to all the rows below the current row (here, row 1). If the operation over the first qudit is $(0|0)$ and $k$ is the first qudit with operation $(1|0)$ after applying $T_{1}$, then preform SWAP(1,$k$)

\textit{Step 2:} Now, for each qudit $j$ whose corresponding pair of entries in the first row 
of $\mathcal{H}_{(X|Z)}$ is $(1|0)$, apply the column operations equivalent to an 
$\mathrm{ADD}^{(1,j)}$ gate to the entire matrix below the current row (here, row 1). This procedure transforms the first row of $\mathcal{H}_{(X|Z)}$ into the vector $(1,0\dots, 0, 0, 0 | 0, 0, \dots, 0,0)$. Let this operation be $A_{1}$.

\textit{Step 3:} After Step (1) and (2), the combined operator we have is $T_{1}$$A_{1}$. Repeat steps (a) and (b) for all the rows of the matrix $\mathcal{H}_{(X|Z)}$. At the end the combined operator we get is $T_1 A_1 T_2 A_2 \dots T_{n-k} A_{n-k}$.

\textit{Step 4: }
The final operation is to apply $DFT^{-1}$ to all qudits corresponding to pivot qudits in the 
transformed parity check matrix. Let this operator be $F^{-1}$. Thus, the complete encoding operator is thus given 
by the product $T_1 A_1 T_2 A_2 \dots T_{n-k} A_{n-k}F^{-1}$.

We now apply this encoding procedure to five qutrit code $\llbracket 5,1,3\rrbracket_3$. The operators for the symplectic transformations are tabulated in TABLE~\ref{tab:qutrit_transformations_scalebox}. 

\begin{table}[ht]
\centering
\caption{Table of qutrit operations (Section III) and their symplectic transformations}
\scalebox{1.4}{%
    \begin{tabular}{|c|c|}
    \hline
    \textbf{Qutrit Operations} & \textbf{Transformations} \\
    \hline
    $M_{2}DFT$       & $(0\,,2)\;\to\;(1\,,0)$ \\
    \hline
    $P_{1}$ & $(1\,,2)\;\to\;(1\,,0)$ \\
    \hline
    $P_{1}M_{2}$           & $(2\,,1)\;\to\;(1\,,0)$ \\
    \hline
    $P_{2}M_{2}$           & $(2\,,2)\;\to\;(1\,,0)$ \\
    \hline
    $M_{2}$          & $(2\,,0)\;\to\;(1\,,0)$ \\
    \hline
    $DFT$       & $(0\,,1)\;\to\;(1\,,0)$ \\
    \hline
    $P_{2}$       & $(1\,,1)\;\to\;(1\,,0)$ \\
    \hline
    \end{tabular}
}

\label{tab:qutrit_transformations_scalebox}
\end{table}
The check matrix \cite{anwar2012qutrit} $\mathcal{H}_{(X|Z)}$ for the code is given by:
\[\mathcal{H}_{(X|Z)} \;=\;
\left[\begin{array}{ccccc|ccccc}
1 & 0 & 0 & 2 & 0 & 0 & 1 & 2 & 0 & 0\\
0 & 1 & 0 & 0 & 2 & 0 & 0 & 1 & 2 & 0\\
2 & 0 & 1 & 0 & 0 & 0 & 0 & 0 & 1 & 2\\
0 & 2 & 0 & 1 & 0 & 2 & 0 & 0 & 0 & 1
\end{array}\right]
\]

\textit{Step 1:} In the first step, we transform each pair $(\alpha_i,\beta_i)$ of the first row that is nonzero to $(1,0)$ using the transformations summarized in Table 1. This is achieved by the transformation:
\begin{equation}
T_{1} \;=\;
\mathrm{id} \;\otimes\;
\mathrm{DFT} \;\otimes\;
 M_2\mathrm{DFT}\;\otimes\;
M_{2} \;\otimes\;
\mathrm{id}
\end{equation}
Moreover, this transformation has to be applied on all the rows below the 1st row.
The resulting stabilizer matrix is

\[
\mathcal{H}_{(X|Z)} = 
\left[ \begin{array}{ccccc|ccccc}
    1 & 1 & 1 & 1 & 0 & 0 & 0 & 0 & 0 & 0 \\
    0 & 0 & 2 & 0 & 2 & 0 & 2 & 0 & 1 & 0 \\
    2 & 0 & 0 & 0 & 0 & 0 & 0 & 1 & 2 & 2 \\
    0 & 0 & 0 & 2 & 0 & 2 & 1 & 0 & 0 & 1
\end{array} \right]
\]
\textit{Step 2:} The first non-zero column is the first one, so we apply $\mathrm{ADD}^{(1,j)}$ operation on all the qudits whose operator in the first row of check matrix is $(1,0)$. This gives us the transformation:
\begin{equation}
A_{1} \;:=\;
\mathrm{ADD}^{(1,2)}\;
\mathrm{ADD}^{(1,3)}\;
\mathrm{ADD}^{(1,4)}\;
\end{equation}
\textit{Step 3:} Now, we also apply this for rest of the rows of the check matrix as well. We get:

\[\mathcal{H}_{(X|Z)} \;=\;
\left[\begin{array}{ccccc|ccccc}
1 & 0 & 0 & 0 & 0 & 0 & 0 & 0 & 0 & 0\\
0 & 0 & 2 & 0 & 2 & 0 & 2 & 0 & 1 & 0\\
2 & 1 & 1 & 1 & 0 & 0 & 0 & 1 & 2 & 2\\
0 & 0 & 0 & 2 & 0 & 0 & 1 & 0 & 0 & 1
\end{array}\right]
\]

We repeat Steps (1) and (2) for rest of the rows. After applying these two steps for 2nd row we get the following operators:

\begin{equation}
T_{2} \;=\;
\mathrm{id} \;\otimes\;
M_2\mathrm{DFT} \;\otimes\;
 M_2\;\otimes\;
\mathrm{DFT} \;\otimes\;
M_2
\end{equation}

\begin{equation}
A_{2} \;:=\;
\mathrm{ADD}^{(2,3)}\;
\mathrm{ADD}^{(2,4)}\;
\mathrm{ADD}^{(2,5)}\;
\end{equation}

and the resulting check matrix is:

\[\mathcal{H}_{(X|Z)} \;=\;
\left[\begin{array}{ccccc|ccccc}
1 & 0 & 0 & 0 & 0 & 0 & 0 & 0 & 0 & 0\\
0 & 1 & 0 & 0 & 0 & 0 & 0 & 0 & 0 & 0\\
2 & 0 & 2 & 2 & 0 & 0 & 0 & 2 & 2 & 1\\
0 & 2 & 1 & 1 & 1 & 0 & 0 & 0 & 1 & 2
\end{array}\right]
\]

For the third row we get : 

\begin{equation}
T_{3} \;=\;
M_2 \;\otimes\;
\mathrm{id} \;\otimes\;
 P_2M_2\;\otimes\;
 P_2M_2\;\otimes\;
\mathrm{DFT}
\end{equation}
\begin{equation}
A_{3} \;:=\;
\mathrm{ADD}^{(3,1)}\;
\mathrm{ADD}^{(3,4)}\;
\mathrm{ADD}^{(3,5)}\;
\end{equation}

and the resulting check matrix is:

\[\mathcal{H}_{(X|Z)} \;=\;
\left[\begin{array}{ccccc|ccccc}
1 & 0 & 0 & 0 & 0 & 0 & 0 & 0 & 0 & 0\\
0 & 1 & 0 & 0 & 0 & 0 & 0 & 0 & 0 & 0\\
0 & 0 & 1 & 0 & 0 & 0 & 0 & 0 & 0 & 0\\
1 & 2 & 2 & 0 & 0 & 0 & 0 & 0 & 0 & 2
\end{array}\right]
\]

For the last rows, we have : 

\begin{equation}
T_{4} \;=\;
\mathrm{id}\;\otimes\;
M_2 \;\otimes\;
M_2 \;\otimes\;
 \mathrm{id}\;\otimes\;
M_2\mathrm{DFT}
\end{equation}
\begin{equation}
A_{4} \;:=\;
\mathrm{SWAP}^{(4,5)}\;
\mathrm{ADD}^{(4,1)}\;
\mathrm{ADD}^{(4,2)}\;
\mathrm{ADD}^{(4,3)}\;
\end{equation}

and the resulting check matrix is given by:

\[\mathcal{H}_{(X|Z)} \;=\;
\left[\begin{array}{ccccc|ccccc}
1 & 0 & 0 & 0 & 0 & 0 & 0 & 0 & 0 & 0\\
0 & 1 & 0 & 0 & 0 & 0 & 0 & 0 & 0 & 0\\
0 & 0 & 1 & 0 & 0 & 0 & 0 & 0 & 0 & 0\\
0 & 0 & 0 & 1 & 0 & 0 & 0 & 0 & 0 & 0
\end{array}\right]
\]

Thus the final encoding operator we obtain is :

\begin{equation}
E = T_1 A_1 T_2 A_2 \dots T_4 A_4F^{-1}.
\end{equation}
The resulting encoding circuit is shown in FIGURE 1. 
\begin{figure}[H]
  \centering
  \includegraphics[width=0.45\textwidth]{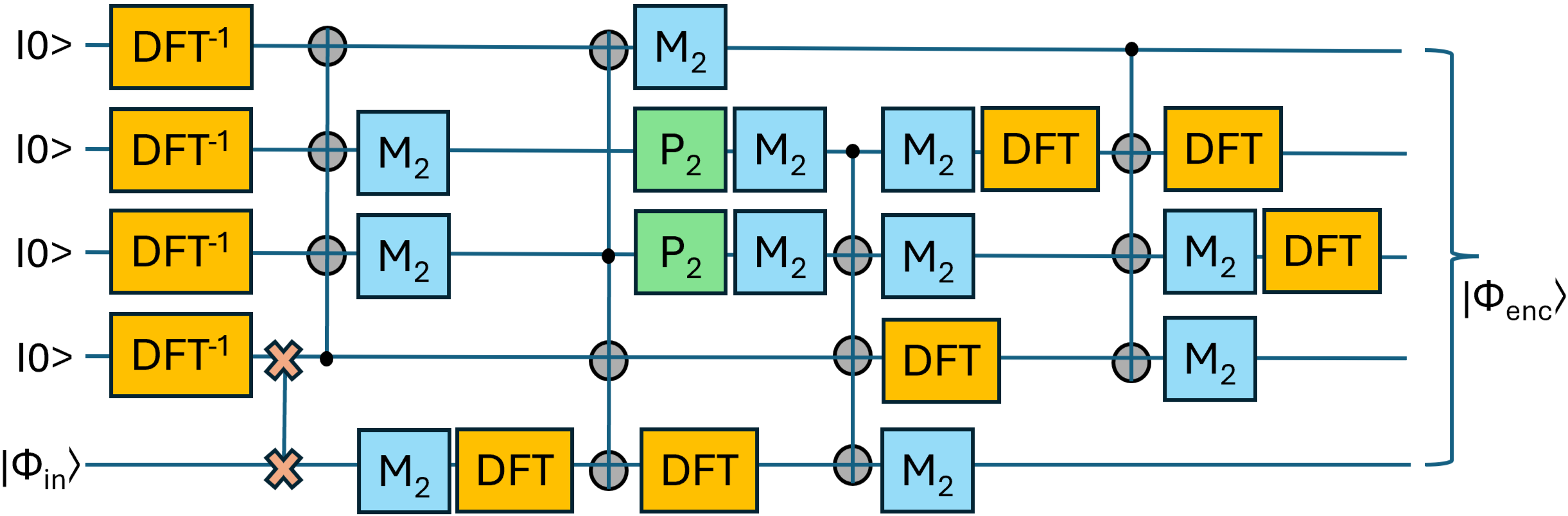}
  \caption{Encoding Circuit for Five Qutrit Code \(\llbracket 5,1,3\rrbracket_3\) using gate set of Section III.}
  \label{fig:circuit1}
\end{figure}

\section{Algorithmic Framework}
This is a multi-stage process involving pre-computation, an exhaustive search, and a final validation and scoring phase, as detailed in Algorithm~\ref{alg:find_optimal_set}.

\begin{enumerate}
    \item \textbf{Initialization and Pre-computation (Lines 4-11)} \\
    The problem space is specified at the beginning of the algorithm. Its inputs are prime dimension $d$, the desired number of operators $\text{set\_size}$, and a set of single-step constraints $\mathcal{C}$. 
    Single step constraints means the transformations that we require to be done in a single step, i.e., with a single operator (gate). The complete mathematical group of possible operators, $S_{all} = SL(2, \mathbb{F}_d)$, which represents the universe of all potential tools for the search, is then produced by the algorithm (Line 5). To optimize this search, the algorithm pre-computes a "pool" of candidate matrices, $P_i$, for each single-step constraint $(\vect{v}_i, 1) \in \mathcal{C}$ (Lines 8-10). It is guaranteed that every matrix in $P_i$ will complete the transformation from $\vect{v}_i \to \vect{v}_{target}$ in a single step.
    \item \textbf{Exhaustive Search for Candidate Sets (Lines 16-20)} \\
    This is the central component of the algorithm, which builds and assesses every possible candidate base set through systematic, exhaustive search. A sequence of nested loops makes up the structure of this search. The outermost loop (Line 13) iterates through the Cartesian product of the candidate pools, $P_1 \times \dots \times P_k$. Each tuple $(M_1, \dots, M_k)$ in this product represents a legitimate set of matrices that concurrently satisfies every single-step constraint specified in $\mathcal{C}$. For each such valid selection, an initial set of required operators, $S_{req}$, is formed, which includes the DFT and the chosen constraint-solving matrices (Line 14). After a pruning step to discard impossible combinations where the number of required unique operators already exceeds the target \texttt{set\_size} (Line 15), the algorithm proceeds to the inner loop (Line 19). This loop completes the candidate set by choosing the remaining $n_{choose}$ operators from the pool of all other available matrices, $S_{pool}$. Each complete set, denoted $S_{base}$ (Line 20), is then passed to the validation and scoring phase. This structured, nested iteration ensures that every potential solution that meets the initial constraints is methodically constructed and tested.

   \setcounter{enumi}{2} 
    \item \textbf{Validation and Pathfinding (Lines 21-23)} \\
    We apply two validation criteria for each of the constructed candidate set $S_{base}$. The set must first be a generating set for the entire group $SL(2, \mathbb{F}_d)$, which is verified by the \textsc{IsGeneratingSet} function (Line 21). If this condition is met, the \textsc{FindShortestPaths} function is then employed to find the optimal path for each required transformation (Line 23). This pathfinding is accomplished via a Breadth-First Search (BFS) that determines the shortest sequence of operations using only the operators available in the given $S_{base}$.

    \item \textbf{Scoring and Selection (Lines 24-40)} \\
    Finally, a cost metric, defined as the sum of the lengths of all shortest paths ($\text{total\_ops}$), is calculated for each valid and fully evaluated base set (Line 31). Throughout the exhaustive search, the algorithm maintains a record of the current optimal solution, consisting of the set $S_{best}$ and its associated cost, $\text{min\_ops}$. A newly evaluated set replaces the current $S_{best}$ if its total operational cost is lower. The final output of the algorithm, after all combinations have been considered, is the single set that performs all transformations with the minimum possible $\text{total\_ops}$ (Line 35).
\end{enumerate}

\begin{algorithm}
\caption{Find Optimal Symplectic Operator Set}
\label{alg:find_optimal_set}
\begin{algorithmic}[1]
\Procedure{FindOptimalSet}{$p, \text{set\_size}, \mathcal{C}$}
    \State \textbf{Input:} Prime dimension $p$, desired set size $\text{set\_size}$, constraints $\mathcal{C} = \{(\vect{v}_i, 1)\}_{i=1}^k$
    \State \textbf{Output:} An optimal operator set $S_{best}$ and their transformations, or failure.
    
    \State $\vect{v}_{target} \gets (1, 0)$
    \State $S_{all} \gets \{ M \in M_2(\mathbb{F}_d) \mid \det(M) = 1 \}$ \Comment{Generate all matrices in $SL(2, \mathbb{F}_p)$}
    \State $M_{DFT} \gets \begin{pmatrix} 0 & p-1 \\ 1 & 0 \end{pmatrix}$
    
    \State \Comment{For each constraint, find all single matrices that satisfy it}
    \For{each constraint $(\vect{v}_i, 1) \in \mathcal{C}$}
        \State $P_i \gets \{ M \in S_{all} \mid \vect{v}_i \cdot M = \vect{v}_{target} \}$
    \EndFor
    
    \State $S_{best} \gets \text{null}$; $\text{min\_ops} \gets \infty$
    
    \State \Comment{Iterate through all combinations of matrices satisfying the constraints}
    \For{each tuple $(M_1, \dots, M_k) \in P_1 \times \dots \times P_k$}
        \State $S_{req} \gets \{M_{DFT}\} \cup \{M_1, \dots, M_k\}$
        \If{$|\text{unique}(S_{req})| > \text{set\_size}$} \textbf{continue} \EndIf
        
        \State $S_{pool} \gets S_{all} \setminus \text{unique}(S_{req})$
        \State $n_{choose} \gets \text{set\_size} - |\text{unique}(S_{req})|$
        
        \For{each combination $S_{other} \subset S_{pool}$ of size $n_{choose}$}
            \State $S_{base} \gets \text{unique}(S_{req}) \cup S_{other}$
            
            \If{\textbf{not} IsGeneratingSet($S_{base}, p$)} \textbf{continue} \EndIf
            
            \State $\text{Paths} \gets \text{FindShortestPaths}(S_{base}, \vect{v}_{target})$
            
            \State \Comment{Verify constraints and calculate total operations}
            \State $\text{total\_ops} \gets 0$; $\text{valid} \gets \text{true}$
            \For{each vector $\vect{v}$ in the problem space}
                \If{$\vect{v} \notin \text{Paths}$} $\text{valid} \gets \text{false}$; \textbf{break} \EndIf
                \If{$(\vect{v}, 1) \in \mathcal{C}$ \textbf{and} $|\text{Paths}[\vect{v}]| \neq 1$} $\text{valid} \gets \text{false}$; \textbf{break} \EndIf
                \State $\text{total\_ops} \gets \text{total\_ops} + |\text{Paths}[\vect{v}]|$
            \EndFor
            
            \If{\text{valid} \textbf{and} $\text{total\_ops} < \text{min\_ops}$}
                \State $\text{min\_ops} \gets \text{total\_ops}$
                \State $S_{best} \gets S_{base}$
            \EndIf
        \EndFor
    \EndFor
    
    \State \Return $S_{best}$
\EndProcedure
\algstore{bkbreak}
\end{algorithmic}
\end{algorithm}

\begin{algorithm}
\addtocounter{algorithm}{-1} 
\caption{Find Optimal Symplectic Operator Set\\ (continued)}

\begin{algorithmic}[1]
\algrestore{bkbreak}
\Function{IsGeneratingSet}{$S, p$}
    \State \Comment{Performs a BFS on the Cayley graph of $\langle S \rangle$ to find standard generators}
    \State \Return True if $\begin{pmatrix} 1 & 1 \\ 0 & 1 \end{pmatrix}$ and $\begin{pmatrix} 0 & p-1 \\ 1 & 0 \end{pmatrix}$ are found, else False.
\EndFunction
\vspace{1em}
\Function{FindShortestPaths}{$S, \vect{v}_{target}$}
    \State \Comment{Performs a backward BFS from $\vect{v}_{target}$ using matrix inverses}
    \State \Return A map from each reachable vector to its shortest path.
\EndFunction
\end{algorithmic}
\end{algorithm}

\section{Generating Gate Set for Prime Dimension Codes}

We consider the $d=3$, i.e., the qutrit case. First, we consider the case where the number of Clifford gates in the generating set is 4, i.e., \texttt{set\_size} = 4. We set the constraints for single-step transformations to include $(0|2) \to (1|0)$, $(2|1) \to (1|0)$, and $(2|0) \to (1|0)$. Applying these as inputs to our algorithm, we get the following gates which are mentioned in Theorem 1.

\newtheorem{theorem}{Theorem} 
\newtheorem{proof}{Proof}     

\begin{theorem}
Let $J'_{\mathbb{F}_3}$ be the group of quantum operations generated by the operators $\{L, \text{DFT}, M_2, R\}$. The group of transformations induced by $J'_{\mathbb{F}_3}$ on the single-qutrit Pauli group, $\mathcal{P}_1$, is isomorphic to the special linear group $SL(2, \mathbb{F}_3)$ where the symplectic matrices corresponding to the actions of the generators are given by:
\[
    \overline{L} = \begin{pmatrix} 0 & 1 \\ 2 & 0 \end{pmatrix}, \quad 
    \overline{\text{DFT}} = \begin{pmatrix} 0 & 2 \\ 1 & 0 \end{pmatrix}, 
\]
\[
    \overline{M_2} = \begin{pmatrix} 2 & 0 \\ 0 & 2 \end{pmatrix}, \quad 
    \overline{R} = \begin{pmatrix} 0 & 2 \\ 1 & 2 \end{pmatrix}
\]
\end{theorem}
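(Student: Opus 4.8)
I would split the assertion into three parts and verify them in turn: (a) each of $L,\text{DFT},M_2,R$ is a single‑qutrit Clifford operator whose induced map on the phase space $\mathbb{F}_3^2\cong\mathcal{P}_1/\langle\omega I\rangle$ is the displayed $2\times2$ matrix; (b) the four matrices $\overline{L},\overline{\text{DFT}},\overline{M_2},\overline{R}$ generate $SL(2,\mathbb{F}_3)$; and (c) hence the group of phase‑space transformations that $J'_{\mathbb{F}_3}$ induces on $\mathcal{P}_1$ is exactly $SL(2,\mathbb{F}_3)$. Throughout I would use two facts already available in the excerpt: $\{\text{DFT},M_\gamma,P_\gamma\}$ generates the single‑qutrit Clifford group $\mathcal{C}_1$ (recalled at the start of Section III), and the symplectic representation $\phi\colon\mathcal{C}_1\to Sp(2,\mathbb{F}_3)=SL(2,\mathbb{F}_3)$ sending a Clifford operator to its symplectic matrix is a surjective homomorphism whose kernel is the subgroup of phased Pauli operators.

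For (a): $\text{DFT}$ and $M_2$ are settled in Section II, with $\overline{\text{DFT}}=\left(\begin{smallmatrix}0&-1\\1&0\end{smallmatrix}\right)=\left(\begin{smallmatrix}0&2\\1&0\end{smallmatrix}\right)$ and $\overline{M_2}=\left(\begin{smallmatrix}2&0\\0&2\end{smallmatrix}\right)$ over $\mathbb{F}_3$. For $L$ and $R$ I would pin them down as short words in the Clifford generators $\{\text{DFT},M_\gamma,P_\gamma\}$: one checks $\overline{M_2}\,\overline{\text{DFT}}=\left(\begin{smallmatrix}0&1\\2&0\end{smallmatrix}\right)=\overline{L}$ and $\overline{\text{DFT}}\,\overline{P_2}=\left(\begin{smallmatrix}0&2\\1&2\end{smallmatrix}\right)=\overline{R}$, so one may take $L$ and $R$ to be the corresponding products (each determined only up to a phase and a Pauli factor, which $\phi$ annihilates and which therefore do not affect the induced transformation). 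Existence of such Clifford operators, together with $\phi(L)=\overline{L}$ and $\phi(R)=\overline{R}$, then follows from surjectivity of $\phi$ and multiplicativity; the verification is a handful of $2\times2$ matrix products over $\mathbb{F}_3$ plus the check $\det\equiv1\pmod 3$ for each generator.

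For (b) and (c): I would reduce (b) to the classical fact that $\left(\begin{smallmatrix}1&1\\0&1\end{smallmatrix}\right)$ and $\left(\begin{smallmatrix}0&-1\\1&0\end{smallmatrix}\right)$ generate $SL(2,\mathbb{Z})$, hence $SL(2,\mathbb{F}_3)$ upon reduction modulo $3$. The second matrix is $\overline{\text{DFT}}$ itself, and a one‑line computation gives $\overline{L}\,\overline{R}=\left(\begin{smallmatrix}1&2\\0&1\end{smallmatrix}\right)$, so $(\overline{L}\,\overline{R})^2=\left(\begin{smallmatrix}1&1\\0&1\end{smallmatrix}\right)$; since all four generators lie in $SL(2,\mathbb{F}_3)$, the subgroup they generate is exactly $SL(2,\mathbb{F}_3)$. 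Then $\phi(J'_{\mathbb{F}_3})$ is the subgroup of $SL(2,\mathbb{F}_3)$ generated by $\{\overline{L},\overline{\text{DFT}},\overline{M_2},\overline{R}\}$, namely all of $SL(2,\mathbb{F}_3)$; and since the conjugation action of any $U\in J'_{\mathbb{F}_3}$ on $\mathcal{P}_1$ is, modulo the central phases, precisely $\phi(U)$, the group of transformations induced by $J'_{\mathbb{F}_3}$ on $\mathcal{P}_1$ equals $\phi(J'_{\mathbb{F}_3})\cong SL(2,\mathbb{F}_3)$, which is the claim.

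The main obstacle is not really step (b): it has genuine content, but collapses to a single matrix identity once one spots that $\overline{L}\,\overline{R}$ is a transvection. The step that requires real care is (a) — showing that the unitaries named $L$ and $R$ truly carry the tabulated symplectic matrices. This forces one to fix and consistently use the operator‑to‑matrix convention (phase‑space row vectors, right multiplication, and the resulting composition order), and, crucially, to have the kernel/center of $\phi$ under control so that phase and Pauli prefactors are known to drop out of the induced action; it is exactly this last point that upgrades "$J'_{\mathbb{F}_3}$ surjects onto $SL(2,\mathbb{F}_3)$" to the asserted isomorphism.
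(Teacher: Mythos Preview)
Your proposal is correct and follows essentially the same approach as the paper: check that each matrix lies in $SL(2,\mathbb{F}_3)$, exhibit the standard generators $S=\overline{\text{DFT}}$ and $T(1)$ as words in the given set, and invoke the classical $(S,T)$-generation of $SL(2,\mathbb{F}_p)$. Your path to the upper transvection is a bit shorter --- you observe $\overline{L}\,\overline{R}=\left(\begin{smallmatrix}1&2\\0&1\end{smallmatrix}\right)$ directly and square it, whereas the paper first builds the lower shear $\overline{M_2}\,\overline{R}\,\overline{\text{DFT}}=\left(\begin{smallmatrix}1&0\\1&1\end{smallmatrix}\right)$ and then conjugates by $S$ --- but the structure and the cited generation result are the same.
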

\begin{proof}
\textbf{(1) Each generator lies in $SL(2,\mathbb F_3)$}. A direct calculation gives
\begin{flalign*}
& \det\overline{L}=1, \quad \det\overline{\mathrm{DFT}}=1, \\
& \det\overline{M_2}=(2\cdot2)\equiv 1\pmod 3, \quad \det\overline{R}=1. &
\end{flalign*}
so $\overline{L},\overline{\mathrm{DFT}},\overline{M_2},\overline{R}\in SL(2,\mathbb F_3)$.

\smallskip
\textbf{(2) Extract the standard generators from the given set.}
It is known \cite{stein2007modular} that $SL(2,\mathbb F_d)$ for an odd prime $d$ is generated by the two matrices
\[
S=\begin{pmatrix}0&-1\\[2pt]1&0\end{pmatrix},\qquad
T(1)=\begin{pmatrix}1&1\\[2pt]0&1\end{pmatrix}.
\]
(Over $\mathbb F_3$, $-1\equiv 2$.) We now show that $S$ and $T(1)$ lie in the subgroup $G:=\langle \overline{L},\overline{\mathrm{DFT}},\overline{M_2},\overline{R}\rangle$.

\emph{(a) $S\in G$.} Here $S=\overline{\mathrm{DFT}}=\begin{psmallmatrix}0&2\\[1pt]1&0\end{psmallmatrix}$ is one of the given generators.

\emph{(b) Produce an elementary shear.} Compute
\[
\overline{R}\,\overline{\mathrm{DFT}}
=\begin{pmatrix}0&2\\[2pt]1&2\end{pmatrix}\!
  \begin{pmatrix}0&2\\[2pt]1&0\end{pmatrix}
=\begin{pmatrix}2&0\\[2pt]2&2\end{pmatrix}
= \overline{M_2}\,\begin{pmatrix}1&0\\[2pt]1&1\end{pmatrix}.
\]
Since $\overline{M_2}^{-1}=\overline{M_2}$ in $\mathbb F_3$, we obtain the \emph{lower} shear
\[
P:=\begin{pmatrix}1&0\\[2pt]1&1\end{pmatrix}
= \overline{M_2}\,(\overline{R}\,\overline{\mathrm{DFT}})\in G.
\]

\emph{(c) Conjugate to get an \emph{upper} shear.} Conjugating $P$ by $S$ swaps upper and lower triangular unipotents:
\[
S^{-1} P\, S
= \begin{pmatrix}1&2\\[2pt]0&1\end{pmatrix}
=: T(2)\in G.
\]
Because $T(a)T(b)=T(a+b)$ for unipotent shears, we have
\[
T(1)=T(2)^2\in G.
\]
Hence both $S$ and $T(1)$ belong to $G$.

\smallskip
\textbf{(3) Generation.} The standard result for linear groups over finite fields states that $\langle S, T(1)\rangle = SL(2,\mathbb F_d)$ for prime $d$ (this is the usual $(s,t)$-presentation with $s^4=1$, $(st)^3=s^2$, together with $t$ unipotent) \cite{gorenstein2007finite}. Specializing to $d=3$ yields $\langle S,T(1)\rangle=SL(2,\mathbb F_3)$. Since $\{S,T(1)\}\subseteq G\subseteq SL(2,\mathbb F_3)$, we conclude $G=SL(2,\mathbb F_3)$.

\smallskip
\textbf{(4) Isomorphism of induced action.} The induced action of single\hyp{}qutrit Clifford operators on Pauli exponents $(a,b)$ factors through the natural homomorphism onto $SL(2,\mathbb F_3)$; modulo global phases (which act trivially on $\mathcal P_1$), this action is faithful on the exponent vectors. Since the image of $J'_{\mathbb F_3}$ equals $SL(2,\mathbb F_3)$ by Steps (1)\,(3), the induced transformation group is (canonically) isomorphic to $SL(2,\mathbb F_3)$. This completes the proof.
\end{proof}

Table 2 shows replication of Table 1 for the new set of operators.

\begin{table}[H]
\caption{Table of qutrit operations (proposed) and their symplectic transformations}
\label{tab:qutrit_transformations_scalebox}
\centering
\scalebox{1.32}{%
    \begin{tabular}{|c|c|}
    \hline
    \textbf{Qutrit Operations} & \textbf{Transformations} \\
    \hline
    $L$       & $(0\,,2)\;\to\;(1\,,0)$ \\
    \hline
    $RM_{2}$  & $(1\,,2)\;\to\;(1\,,0)$ \\
    \hline
    $R$       & $(2\,,1)\;\to\;(1\,,0)$ \\
    \hline
    DFT$R$    & $(2\,,2)\;\to\;(1\,,0)$ \\
    \hline
    $M_{2}$   & $(2\,,0)\;\to\;(1\,,0)$ \\
    \hline
    DFT       & $(0\,,1)\;\to\;(1\,,0)$ \\
    \hline
    $LR$      & $(1\,,1)\;\to\;(1\,,0)$ \\
    \hline
    \end{tabular}
}
\end{table}

FIGURE 2 shows the encoding circuit if we apply these gates for $\llbracket 5,1,3\rrbracket_3$ code. We see a reduction of single qutrit gate count by 15.8 \% and circuit depth by 21.4 \%. The operations using the novel gate set have been tabulated in Table 3. When we use these derived gates and apply the algorithm mentioned in section IV to the 9 qutrit code \(\llbracket 9,5,3\rrbracket_3\), we get the operators summarized in Table 4 and the derived encoder circuit in FIGURE 3. The encoding circuit for this code has also been proposed by Grassl \textit{et al.} \cite{grassl2003efficient}. When we compare the number of single qutrit gates, we observe a reduction of 15 \%. 

To show the generalization of the proposed algorithm we now consider the code with next higher prime dimension, i.e., with d = 5, $[[10,6,3]]_5$ code. When we run the algorithm for \texttt{set\_size} = 4, we get the following set as mentioned in Theorem 2. When we compare it with the gate set mentioned in Section III, we see 9 \% reduction in single qudit gate count for encoder circuit and a 12 \% reduction in circuit depth. The operators for encoder circuit are tabulated in Table 5.

\begin{theorem}
Let $J'_{\mathbb{F}_5}$ be the group of quantum operations generated by the operators $\{\text{DFT}, P, Q, S\}$. The group of transformations induced by $J'_{\mathbb{F}_5}$ on the single-qudit Pauli group, $\mathcal{P}_1$ (for dimension $d=5$), is isomorphic to the special linear group $SL(2, \mathbb{F}_5)$ where the symplectic matrices corresponding to the actions of the generators are given by:
\[
    \overline{\text{DFT}} = \begin{pmatrix} 0 & 4 \\ 1 & 0 \end{pmatrix}, \quad
    \overline{P} = \begin{pmatrix} 2 & 3 \\ 2 & 1 \end{pmatrix},
\]
\[
    \overline{Q} = \begin{pmatrix} 2 & 0 \\ 0 & 3 \end{pmatrix}, \quad
    \overline{S} = \begin{pmatrix} 4 & 4 \\ 0 & 4 \end{pmatrix}
\]
\end{theorem}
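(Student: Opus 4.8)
The plan is to follow the same four‑part template used in the proof of Theorem~1. First I would verify that each of the four symplectic matrices lies in $SL(2,\mathbb F_5)$ by computing its determinant modulo $5$: $\det\overline{\text{DFT}}=-4\equiv 1$, $\det\overline{P}=2\cdot 1-3\cdot 2=-4\equiv 1$, $\det\overline{Q}=6\equiv 1$, and $\det\overline{S}=16\equiv 1$. Hence $G:=\langle\overline{\text{DFT}},\overline{P},\overline{Q},\overline{S}\rangle\subseteq SL(2,\mathbb F_5)$, which settles one inclusion.

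The crux is to exhibit the standard generators $S_0=\begin{psmallmatrix}0&-1\\1&0\end{psmallmatrix}$ and the unipotent shear $T(1)=\begin{psmallmatrix}1&1\\0&1\end{psmallmatrix}$ inside $G$. Over $\mathbb F_5$ we have $\overline{\text{DFT}}=\begin{psmallmatrix}0&4\\1&0\end{psmallmatrix}=\begin{psmallmatrix}0&-1\\1&0\end{psmallmatrix}=S_0$, so $S_0\in G$ outright, and therefore $-I=S_0^{2}\in G$. Then, since $\overline{S}=\begin{psmallmatrix}4&4\\0&4\end{psmallmatrix}=-\begin{psmallmatrix}1&1\\0&1\end{psmallmatrix}=(-I)\,T(1)$ in $\mathbb F_5$, I obtain $T(1)=S_0^{2}\,\overline{S}\in G$. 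This is notably cleaner than the $d=3$ case: no product of distinct generators is needed to build a shear, and $\overline P,\overline Q$ are in fact irrelevant to generation — their sole role in the theorem is to be valid $SL(2,\mathbb F_5)$ elements (checked in the first step), which keeps $G$ from being larger than claimed, while operationally they supply the short single‑step transformations of Table~5.

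With $\{S_0,T(1)\}\subseteq G\subseteq SL(2,\mathbb F_5)$ in hand, I would invoke the standard result already cited for Theorem~1 that $\langle S_0,T(1)\rangle=SL(2,\mathbb F_d)$ for every prime $d$ (the usual $(s,t)$‑presentation with $s^4=1$, $(st)^3=s^2$, and $t$ unipotent), specialize it to $d=5$, and conclude $G=SL(2,\mathbb F_5)$; a quick numerical sanity check is $|SL(2,\mathbb F_5)|=5(5^2-1)=120$. Finally, for the isomorphism of the induced action I would reuse Step~(4) of Theorem~1 verbatim: the action of single‑qudit Clifford operators on the Pauli exponent vectors $(a,b)\in\mathbb F_5^2$ factors through the natural surjection onto $SL(2,\mathbb F_5)$ and is faithful modulo global phases (which act trivially on $\mathcal P_1$), so the transformation group induced by $J'_{\mathbb F_5}$ is canonically isomorphic to $SL(2,\mathbb F_5)$. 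I do not anticipate a genuine obstacle; the only delicate point is recognizing the identity $\overline{S}=-T(1)$ over $\mathbb F_5$, after which the argument is routine.
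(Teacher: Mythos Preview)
Your proposal is correct and follows exactly the approach the paper indicates: the paper's own proof of Theorem~2 reads in full ``The proof is similar to the proof for Theorem~1,'' and you have carried out precisely that template. Your observation that $\overline{S}=(-I)\,T(1)$ over $\mathbb F_5$ makes Step~(2) even shorter than in the $d=3$ case is a nice simplification the paper does not spell out.
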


\begin{proof}
The proof is similar to the proof for Theorem 1.
\end{proof}

We tabulate the generating gate set derived from our code with the general generating gate set mentioned in Section II and III for d = 3 and d = 5 in Table 6. This table is based on the number of gates we considered in our generating set.

\begin{figure}[htbp]
  \centering
  \includegraphics[width=0.4\textwidth]{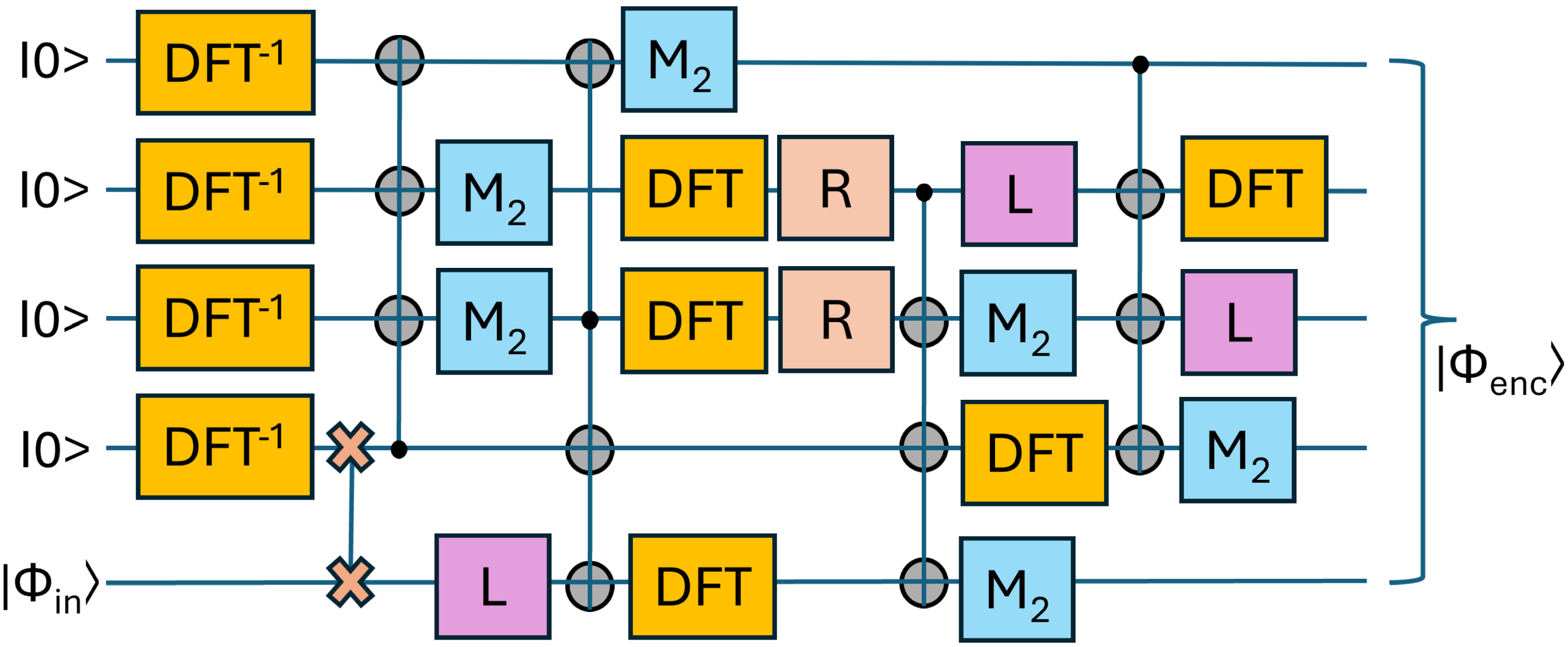}
  \caption{Encoding Circuit for Five Qutrit Code \(\llbracket 5,1,3\rrbracket_3\) using the proposed gate set.}
  \label{fig:circuit1}
\end{figure}

\begin{figure}[htbp]
  \centering
  \includegraphics[width=0.5\textwidth]{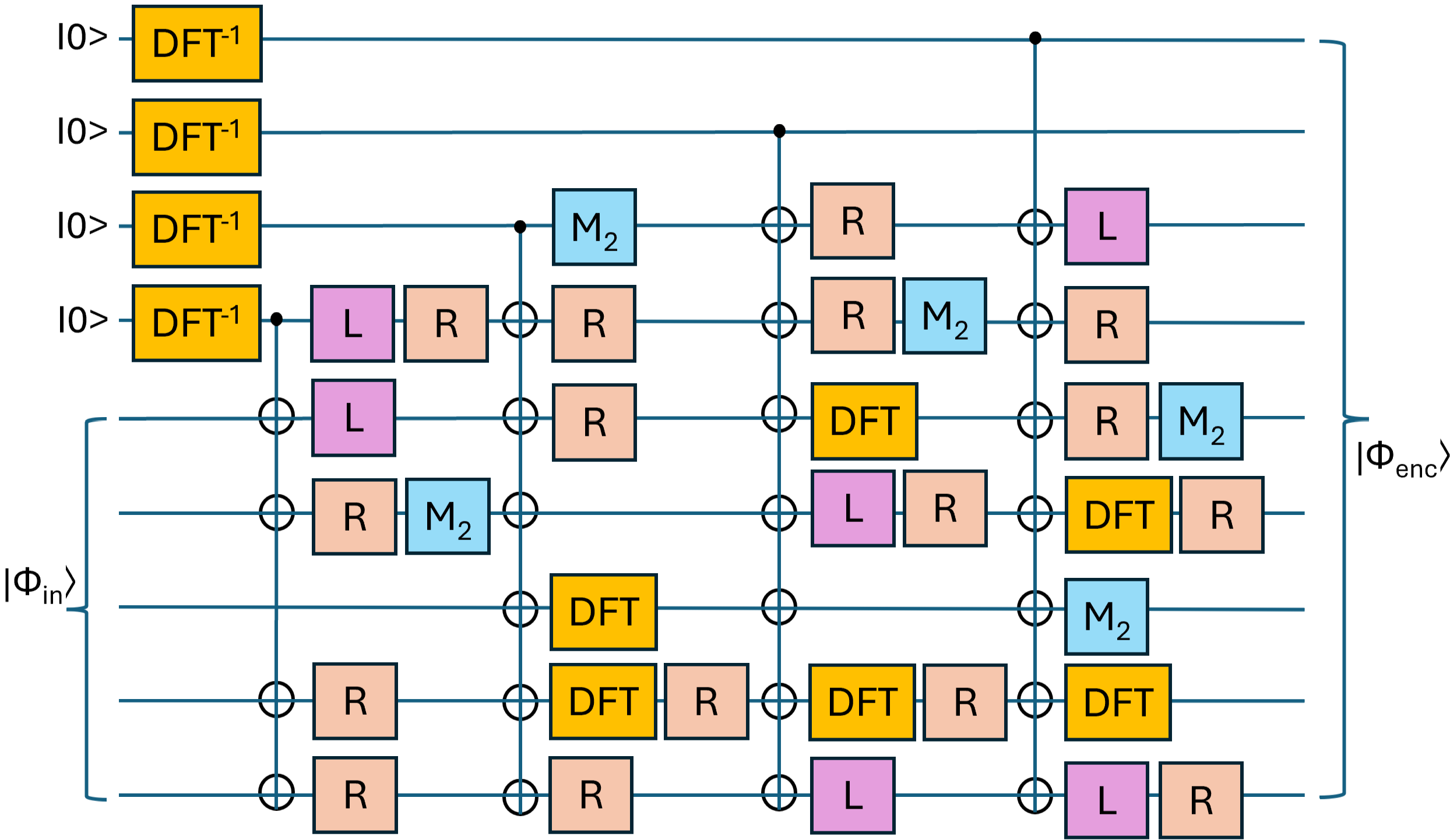}
  \caption{Encoding Circuit for Nine Qutrit Code \(\llbracket 9,5,3\rrbracket_3\) using the proposed gate set.}
  \label{fig:circuit1}
\end{figure}

\begin{table}[H]
\caption{Table of operators for encoding circuit of \(\llbracket 5,1,3\rrbracket_3\) code}
\centering
\scalebox{1.1}{%
    \begin{tabular}{|c|c|}
    \hline
    \textbf{Stage} & \textbf{Operator Form} \\
    \hline
    $T_1$       & $\mathrm{id} \;\otimes\;\mathrm{DFT} \;\otimes\;L\;\otimes\;M_{2} \;\otimes\;\mathrm{id}$ \\
    \hline
    $A1$ & $\mathrm{ADD}^{(1,2)}\;\mathrm{ADD}^{(1,3)}\;\mathrm{ADD}^{(1,4)}\;$ \\
    \hline
    $T2$           & $\mathrm{id} \;\otimes\;L \;\otimes\;M_2\;\otimes\;\mathrm{DFT} \;\otimes\;
M_2$ \\
    \hline
    $A_2$           & $\mathrm{ADD}^{(2,3)}\;\mathrm{ADD}^{(2,4)}\;\mathrm{ADD}^{(2,5)}\;$ \\
    \hline
    $T_3$          & $M_2 \;\otimes\;\mathrm{id} \;\otimes\;\mathrm{DFT}R\;\otimes\;\mathrm{DFT}R\;\otimes\;
\mathrm{DFT}$ \\
    \hline
    $A_3$       & $\mathrm{ADD}^{(3,1)}\;\mathrm{ADD}^{(3,4)}\;\mathrm{ADD}^{(3,5)}\;$ \\
    \hline
    $T_4$       & $\mathrm{id}\;\otimes\;M_2 \;\otimes\;M_2 \;\otimes\;\mathrm{id}\;\otimes\;L$ \\
    \hline
    $A_4$       & $\mathrm{SWAP}^{(4,5)}\;\mathrm{ADD}^{(4,1)}\;\mathrm{ADD}^{(4,2)}\;\mathrm{ADD}^{(4,3)}\;$ \\
    \hline
    \end{tabular}
}

\label{tab:qutrit_transformations_scalebox}
\end{table}

\begin{table}[H]
\caption{Table of operators for encoding circuit of \(\llbracket 9,5,3\rrbracket_3\) code}
\centering
\begin{tabular}{|c|c|}
\hline
\textbf{Stage} & \textbf{Operator Form} \\
\hline
$T_1$ & \begin{tabular}{@{}c@{}}$\mathrm{id} \otimes \mathrm{id} \otimes L \otimes R \otimes RM_{2} \otimes$ \\ $\mathrm{DFTR} \otimes M_{2} \otimes \mathrm{DFT} \otimes LR$\end{tabular} \\
\hline
$A_1$ & \begin{tabular}{@{}c@{}}$\mathrm{ADD}(1,3)\;\mathrm{ADD}(1,4)\;\mathrm{ADD}(1,5)\;\mathrm{ADD}(1,6)\;$ \\ $\mathrm{ADD}(1,7)\;\mathrm{ADD}(1,8)\;\mathrm{ADD}(1,9)$\end{tabular} \\
\hline
$T_2$ & \begin{tabular}{@{}c@{}}$\mathrm{id} \otimes \mathrm{id} \otimes R \otimes RM_{2} \otimes \mathrm{DFT} \otimes$ \\ $LR \otimes \mathrm{id} \otimes \mathrm{DFTR} \otimes L$\end{tabular} \\
\hline
$A_2$ & \begin{tabular}{@{}c@{}}$\mathrm{ADD}(2,3)\;\mathrm{ADD}(2,4)\;\mathrm{ADD}(2,5)\;\mathrm{ADD}(2,6)\;$ \\ $\mathrm{ADD}(2,7)\;\mathrm{ADD}(2,8)\;\mathrm{ADD}(2,9)$\end{tabular} \\
\hline
$T_3$ & \begin{tabular}{@{}c@{}}$\mathrm{id} \otimes \mathrm{id} \otimes M_{2} \otimes R \otimes R \otimes$ \\ $\mathrm{id} \otimes \mathrm{DFT} \otimes \mathrm{DFTR} \otimes R$\end{tabular} \\
\hline
$A_3$ & \begin{tabular}{@{}c@{}}$\mathrm{ADD}(3,4)\;\mathrm{ADD}(3,5)\;\mathrm{ADD}(3,6)\;$ \\ $\mathrm{ADD}(3,7)\;\mathrm{ADD}(3,8)\;\mathrm{ADD}(3,9)$\end{tabular} \\
\hline
$T_4$ & \begin{tabular}{@{}c@{}}$\mathrm{id} \otimes \mathrm{id} \otimes \mathrm{id} \otimes LR \otimes L \otimes$ \\ $RM_{2} \otimes \mathrm{id} \otimes R \otimes R$\end{tabular} \\
\hline
$A_4$ & \begin{tabular}{@{}c@{}}$\mathrm{ADD}(4,5)\;\mathrm{ADD}(4,6)\;$ \\ $\mathrm{ADD}(4,8)\;\mathrm{ADD}(4,9)$\end{tabular} \\
\hline
\end{tabular}
\label{tab:my_new_table}
\end{table}

\begin{table}[H]
\caption{Table of operators for encoding circuit of \(\llbracket 10,6,3\rrbracket_5\) code}
\centering
\scalebox{1.1}{
\begin{tabular}{|c|c|}
\hline
\textbf{Stage} & \textbf{Operator Form} \\
\hline
$T_1$ & \begin{tabular}{@{}c@{}}$\mathrm{DFT} \otimes Q\mathrm{DFT}Q \otimes PS \otimes P\mathrm{DFT} \otimes S \otimes \mathrm{DFT}$ \\ $\otimes Q\mathrm{DFT}Q \otimes PS \otimes P\mathrm{DFT} \otimes S$\end{tabular} \\
\hline
$A_1$ & \begin{tabular}{@{}c@{}}$\mathrm{ADD}(1,2)\;\mathrm{ADD}(1,3)\;\mathrm{ADD}(1,4)$\\
$\;\mathrm{ADD}(1,5)\;\mathrm{ADD}(1,6)$ \\ $\;\mathrm{ADD}(1,7)\;\mathrm{ADD}(1,8)\;\mathrm{ADD}(1,9)\;\mathrm{ADD}(1,10)$\end{tabular} \\
\hline
$T_2$ & \begin{tabular}{@{}c@{}}$\mathrm{id} \otimes \mathrm{DFT}PS \otimes \mathrm{DFT}P \otimes Q \otimes SQ \otimes \mathrm{DFT}$ \\ $\otimes \mathrm{DFT}PS \otimes \mathrm{DFT}P \otimes Q \otimes SQ$\end{tabular} \\
\hline
$A_2$ & \begin{tabular}{@{}c@{}}$\mathrm{ADD}(2,3)\;\mathrm{ADD}(2,4)\;\mathrm{ADD}(2,5)\;\mathrm{ADD}(2,6)$ \\ $\;\mathrm{ADD}(2,7)\;\mathrm{ADD}(2,8)\;\mathrm{ADD}(2,9)\;\mathrm{ADD}(2,10)$\end{tabular} \\
\hline
$T_3$ & \begin{tabular}{@{}c@{}}$\mathrm{id} \otimes \mathrm{DFT}\mathrm{DFT} \otimes Q\mathrm{DFT}Q \otimes S \otimes S\mathrm{DFT} \otimes \mathrm{id}$ \\ $\otimes SS \otimes Q\mathrm{DFT}Q \otimes S \otimes S\mathrm{DFT}$\end{tabular} \\
\hline
$A_3$ & \begin{tabular}{@{}c@{}}$\mathrm{ADD}(3,1)\;\mathrm{ADD}(3,2)\;\mathrm{ADD}(3,4)$\\
$\;\mathrm{ADD}(3,5)\;\mathrm{ADD}(3,6)$ \\ $\;\mathrm{ADD}(3,7)\;\mathrm{ADD}(3,8)\;\mathrm{ADD}(3,9)\;\mathrm{ADD}(3,10)$\end{tabular} \\
\hline
$T_4$ & \begin{tabular}{@{}c@{}}$\mathrm{id} \otimes \mathrm{DFT}\mathrm{DFT} \otimes \mathrm{id} \otimes PS \otimes \mathrm{DFT}Q$ \\ $\otimes \mathrm{id} \otimes \mathrm{id} \otimes PP \otimes PS \otimes \mathrm{DFT}Q$\end{tabular} \\
\hline
$A_4$ & \begin{tabular}{@{}c@{}}$\mathrm{ADD}(4,2)\;\mathrm{ADD}(4,5)\;\mathrm{ADD}(4,7)\;\mathrm{ADD}$\\
$(4,8)\;\mathrm{ADD}(4,9)\;\mathrm{ADD}(4,10)$\end{tabular} \\
\hline
\end{tabular}
} 
\label{tab:my_new_table}
\end{table}

\begin{table}[H]
\caption{Symplectic representation of gate sets for different values of d and set\_size}
\centering
\scalebox{0.75}{
\begin{tabular}{|c|c|c|c|}
\hline
\textbf{d} & \textbf{\begin{tabular}{@{}c@{}}set\_size\end{tabular}} & \textbf{\begin{tabular}{@{}c@{}}Symp. Rep. of Gates \\ (Section III)\end{tabular}} & \textbf{\begin{tabular}{@{}c@{}}Symp. Rep. of Gates \\ (Proposed)\end{tabular}} \\
\hline

\multirow{2}{*}{3} & 3 & 
\begin{tabular}{@{}c@{}}$\text{DFT} = \begin{pmatrix} 0 & 2 \\ 1 & 0 \end{pmatrix}, P_1 = \begin{pmatrix} 1 & 1 \\ 0 & 1 \end{pmatrix},$ \\ $P_2 = \begin{pmatrix} 1 & 2 \\ 0 & 1 \end{pmatrix}$\end{tabular} & 
\begin{tabular}{@{}c@{}}$\text{DFT} = \begin{pmatrix} 0 & 2 \\ 1 & 0 \end{pmatrix}, A_1 = \begin{pmatrix} 1 & 2 \\ 2 & 2 \end{pmatrix},$ \\ $A_2 = \begin{pmatrix} 2 & 1 \\ 0 & 2 \end{pmatrix}$\end{tabular} \\
\cline{2-4}
& 4 & 
\begin{tabular}{@{}c@{}}$\text{DFT} = \begin{pmatrix} 0 & 2 \\ 1 & 0 \end{pmatrix}, P_1 = \begin{pmatrix} 1 & 1 \\ 0 & 1 \end{pmatrix},$ \\ $P_2 = \begin{pmatrix} 1 & 2 \\ 0 & 1 \end{pmatrix}, M_2 = \begin{pmatrix} 2 & 0 \\ 0 & 2 \end{pmatrix}$\end{tabular} & 
\begin{tabular}{@{}c@{}}$\text{DFT} = \begin{pmatrix} 0 & 2 \\ 1 & 0 \end{pmatrix}, B_1 = \begin{pmatrix} 0 & 1 \\ 2 & 0 \end{pmatrix},$ \\ $B_2 = \begin{pmatrix} 2 & 0 \\ 0 & 2 \end{pmatrix}, B_3 = \begin{pmatrix} 0 & 2 \\ 1 & 2 \end{pmatrix}$\end{tabular} \\
\hline

\multirow{3}{*}{5} & 3 & 
\begin{tabular}{@{}c@{}}$\text{DFT} = \begin{pmatrix} 0 & 4 \\ 1 & 0 \end{pmatrix}, P_1 = \begin{pmatrix} 1 & 1 \\ 0 & 1 \end{pmatrix},$ \\ $P_2 = \begin{pmatrix} 1 & 2 \\ 0 & 1 \end{pmatrix}$\end{tabular} & 
\begin{tabular}{@{}c@{}}$\text{DFT} = \begin{pmatrix} 0 & 4 \\ 1 & 0 \end{pmatrix}, C_1 = \begin{pmatrix} 3 & 0 \\ 4 & 2 \end{pmatrix},$ \\ $C_2 = \begin{pmatrix} 1 & 4 \\ 3 & 3 \end{pmatrix}$\end{tabular} \\
\cline{2-4}
& 4 & 
\begin{tabular}{@{}c@{}}$\text{DFT} = \begin{pmatrix} 0 & 4 \\ 1 & 0 \end{pmatrix}, P_1 = \begin{pmatrix} 1 & 1 \\ 0 & 1 \end{pmatrix},$ \\ $P_2 = \begin{pmatrix} 1 & 2 \\ 0 & 1 \end{pmatrix}, M_2 = \begin{pmatrix} 3 & 0 \\ 0 & 2 \end{pmatrix}$\end{tabular} & 
\begin{tabular}{@{}c@{}}$\text{DFT} = \begin{pmatrix} 0 & 4 \\ 1 & 0 \end{pmatrix}, D_1 = \begin{pmatrix} 2 & 3 \\ 2 & 1 \end{pmatrix},$ \\ $D_2 = \begin{pmatrix} 2 & 0 \\ 0 & 3 \end{pmatrix}, D_3 = \begin{pmatrix} 4 & 4 \\ 0 & 4 \end{pmatrix}$\end{tabular} \\
\cline{2-4}
& 5 & 
\begin{tabular}{@{}c@{}}$\text{DFT} = \begin{pmatrix} 0 & 4 \\ 1 & 0 \end{pmatrix}, P_1 = \begin{pmatrix} 1 & 1 \\ 0 & 1 \end{pmatrix},$ \\ $P_2 = \begin{pmatrix} 1 & 2 \\ 0 & 1 \end{pmatrix}, M_2 = \begin{pmatrix} 3 & 0 \\ 0 & 2 \end{pmatrix},$ \\ $M_3 = \begin{pmatrix} 2 & 0 \\ 0 & 2 \end{pmatrix}$\end{tabular} & 
\begin{tabular}{@{}c@{}}$\text{DFT} = \begin{pmatrix} 0 & 4 \\ 1 & 0 \end{pmatrix}, E_1 = \begin{pmatrix} 0 & 3 \\ 3 & 1 \end{pmatrix},$ \\ $E_2 = \begin{pmatrix} 2 & 4 \\ 2 & 2 \end{pmatrix}, E_3 = \begin{pmatrix} 4 & 0 \\ 1 & 4 \end{pmatrix},$ \\ $E_4 = \begin{pmatrix} 0 & 4 \\ 1 & 4 \end{pmatrix}$\end{tabular} \\
\hline
\end{tabular}
} 
\label{tab:my_new_table}
\end{table}

We also tabulate the improvement that our algorithm proposes for various \texttt{set\_size} for different codes in Table 7.

\newcolumntype{C}[1]{>{\centering\arraybackslash}p{#1}}
\begin{table}[ht]
\caption{Comparison of single qudit gate count and circuit depth for different quantum codes and generating gate sets}
\centering
\scalebox{1}{
\begin{tabular}{| C{1.2cm} | C{1.1cm} | C{1.1cm} | C{1.1cm} | C{0.65cm} | C{0.7cm} |}
\hline
\textbf{Codes} & \textbf{\begin{tabular}{@{}c@{}}Total\\ \# of Gates \\ in the \\ genera-\\ting set \\(set\_size)\end{tabular}} & \textbf{\begin{tabular}{@{}c@{}}Total \# of \\Single\\ Qudit\\Gates \\ (Sec. III)\end{tabular}} & \textbf{\begin{tabular}{@{}c@{}}Total \# of \\Single\\ Qudit\\Gates \\ (Proposed)\end{tabular}} & \textbf{\begin{tabular}{@{}c@{}}Gate \\ Redu-\\ction\end{tabular}} & \textbf{\begin{tabular}{@{}c@{}}Depth \\ Redu-\\ction\end{tabular}} \\
\hline

\multirow{2}{*}{$[[5,1,3]]_3$} & 3 & 32 & 18 & 44 \% & 42 \% \\
\cline{2-6}
& 4 & 19 & 16 & 16 \% & 21 \% \\
\hline

\multirow{2}{*}{$[[7,1,3]]_3$} & 3 & 16 & 14 & 13 \% & 17 \% \\
\cline{2-6}
& 4 & 10 & 8 & 20 \% & 29 \%\\
\hline

\multirow{2}{*}{$[[9,5,3]]_3$} & 3 & 51 & 40 & 22 \% & 33 \%  \\
\cline{2-6}
& 4 & 39 & 33 & 15 \% & 0 \%\\
\hline

\multirow{3}{*}{$[[10,6,3]]_5$} & 3 & 84 & 66 & 21 \% & 14 \%\\
\cline{2-6}
& 4 & 62 & 57 & 9 \% & 12 \% \\
\cline{2-6}
& 5 & 54 & 49 & 9 \% & 20 \% \\
\hline

\end{tabular}
} 
\label{tab:gate_count_comparison}
\end{table}

\section{Proposed Operators for d = 3}
Applying the framework to derive the optimal generating set of operators for the qutrit case we found $\{{L}, {\text{DFT}}, {M_2}, {R}\}$ to be the generating set of operators. Two of the operators $M_2$ and $DFT$ have been defined in section II-G. Now we give two theorems that will completely specify the other two operators, i.e., L and R.\\

\begin{theorem}
The qutrit Clifford operator $L$,
\begin{equation}
    L = \frac{1}{\sqrt{3}} \sum_{j,k=0}^{2} \omega^{2jk} |j\rangle \langle k|
\end{equation}
where $\omega = e^{2\pi i/3}$, realizes the symplectic transformation in $\mathbb{F}_3^2$ represented by the matrix:
\begin{equation}
    \overline{L} = 
    \begin{pmatrix}
        0 & 1 \\
        2 & 0
    \end{pmatrix} .
\end{equation}
This corresponds to the mapping of Pauli operators $X(\alpha) \mapsto Z(\alpha)$ and $Z(\beta) \mapsto X(2\beta)$.
\end{theorem}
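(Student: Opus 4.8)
The plan is to identify $L$ with the inverse qutrit Fourier transform and transport its already-known symplectic data, keeping a direct conjugation computation in reserve as an independent check. The key opening observation is that $2\equiv-1\pmod 3$, so $\omega^{2jk}=\omega^{-jk}=\overline{\omega^{jk}}$; matching this coefficient by coefficient against $\mathrm{DFT}_3=\tfrac{1}{\sqrt{3}}\sum_{x,z}\omega^{xz}\ket{z}\!\bra{x}$ shows that $L=\mathrm{DFT}_3^{\dagger}=\mathrm{DFT}_3^{-1}$. Equivalently, rewriting $\omega^{2jk}=\omega^{j(2k)}$ and using $M_2\ket{k}=\ket{2k}$ exhibits the factorization $L=\mathrm{DFT}_3\,M_2=M_2\,\mathrm{DFT}_3$, the two orders coinciding because $M_2$ is the parity permutation $\ket{x}\mapsto\ket{-x}$, which commutes with $\mathrm{DFT}_3$. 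In either form, $L$ is a product of unitaries, hence unitary, and thus a legitimate Clifford element.

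With the factorization in hand, the short route invokes the fact---standard, and already implicit in the composite entries of the transformation tables of Sections III and V---that sending a single-qutrit Clifford operator to its induced action on Pauli exponents is a homomorphism into $SL(2,\mathbb{F}_3)$. Substituting the matrices $\overline{\mathrm{DFT}}=\begin{psmallmatrix}0&2\\1&0\end{psmallmatrix}$ and $\overline{M_2}=\begin{psmallmatrix}2&0\\0&2\end{psmallmatrix}$ from Section II-G into $L=\mathrm{DFT}_3M_2$ gives, over $\mathbb{F}_3$,
\[
\overline{L}=\begin{pmatrix}0&2\\1&0\end{pmatrix}\begin{pmatrix}2&0\\0&2\end{pmatrix}=\begin{pmatrix}0&1\\2&0\end{pmatrix},
\]
the ordering being irrelevant since $\overline{M_2}$ is scalar. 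Next I would note $\det\overline{L}=-2\equiv1\pmod 3$; in dimension $2$ the identity $N^{T}SN=\det(N)\,S$ holds for every $N$, so $\det\overline{L}=1$ already yields $\overline{L}^{T}S\overline{L}=S$, placing $\overline{L}$ in $SL(2,\mathbb{F}_3)$. Finally the Pauli images follow by applying $\overline{L}$ to the standard row vectors: $X(\alpha)\leftrightarrow(\alpha,0)\mapsto(\alpha,0)\overline{L}=(0,\alpha)\leftrightarrow Z(\alpha)$, and $Z(\beta)\leftrightarrow(0,\beta)\mapsto(0,\beta)\overline{L}=(2\beta,0)\leftrightarrow X(2\beta)$.

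As an independent check I would perform the conjugation directly. Writing $X(\alpha)=\sum_x\ket{x+\alpha}\!\bra{x}$ and $Z(\beta)=\sum_z\omega^{\beta z}\ket{z}\!\bra{z}$, I form $L^{-1}X(\alpha)L=\mathrm{DFT}_3\,X(\alpha)\,\mathrm{DFT}_3^{-1}$ and $L^{-1}Z(\beta)L=\mathrm{DFT}_3\,Z(\beta)\,\mathrm{DFT}_3^{-1}$, expand with $\mathrm{DFT}_3\ket{k}=\tfrac{1}{\sqrt{3}}\sum_j\omega^{jk}\ket{j}$, and collapse the resulting double sums using the character identity $\sum_{x\in\mathbb{F}_3}\omega^{cx}=3\,\delta_{c,0}$. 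The surviving Kronecker deltas leave a diagonal ($Z$-type) operator and a shift ($X$-type) operator carrying exactly the exponents predicted above, which confirms the statement.

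The one place that genuinely needs care---and the step I expect to be the main obstacle---is convention bookkeeping: whether the induced action is $P\mapsto LPL^{-1}$ or $P\mapsto L^{-1}PL$, and whether phase-space vectors act as rows on the right or columns on the left. Either switch replaces $\overline{L}$ by its inverse (or a transpose), turning the claimed map into $X(\alpha)\mapsto Z(2\alpha)$, $Z(\beta)\mapsto X(\beta)$. I would resolve it by adopting throughout the single convention already used for $\mathrm{DFT}$, $M_2$, $P_\gamma$, ADD, and SWAP in Section II-G and in the transformation tables---row vectors with right multiplication, so that the operator acts as $(a,b)\mapsto(a,b)\overline{L}$---under which the matrix and mapping stated in the theorem are precisely those obtained above. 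Everything else reduces to a routine finite-field character-sum calculation.
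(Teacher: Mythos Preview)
Your argument is correct, but it follows a different route from the paper's proof. The paper proceeds by direct brute-force conjugation: it applies $L$, then $X(\alpha)$ (respectively $Z(\beta)$), then $L^{-1}$ to an arbitrary basis ket $\ket{k}$, expands everything via the sum definition of $L$, and collapses the double sums with the orthogonality relation $\sum_p\omega^{cp}=3\delta_{c,0}$ to read off $Z(\alpha)\ket{k}$ and $X(2\beta)\ket{k}$. Your main line instead recognizes at the outset that $\omega^{2jk}=\omega^{-jk}$ forces $L=\mathrm{DFT}_3^{-1}=\mathrm{DFT}_3\,M_2$, and then simply multiplies the already-tabulated symplectic matrices $\overline{\mathrm{DFT}}$ and $\overline{M_2}$ from Section~II-G, invoking the homomorphism that underlies all the composite entries in Tables~1 and~2. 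This is shorter and more conceptual: it makes transparent that $L$ is not a genuinely new gate but the inverse Fourier transform, and it avoids re-deriving character sums. The paper's direct computation, on the other hand, is self-contained and supplies a template that carries over unchanged to operators such as $R$ in Theorem~4, which do not factor so obviously into previously catalogued gates; your factorization trick would require first hunting for a decomposition there. Your closing remark about convention bookkeeping is well taken and correctly resolved by anchoring to the row-vector, right-multiplication convention used throughout Section~II-G.
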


\begin{proof}
To prove this, we demonstrate the conjugation of the generalized Pauli operators $X(\alpha)$ and $Z(\beta)$ by the operator $L$. The resulting transformation of the exponent vector $(\alpha, \beta)$ must follow the rule $(\alpha', \beta') = (\alpha, \beta) \cdot \overline{L}$.

\subsubsection*{1. Action on $X(\alpha)$}
We compute the action of the conjugated operator on an arbitrary basis state $|k\rangle$:
\begin{align}
    (L^{-1} X(\alpha) L) |k\rangle &= L^{-1} X(\alpha) \left(\frac{1}{\sqrt{3}} \sum_{j=0}^{2} \omega^{2kj} |j\rangle\right) \\
    &= \frac{1}{\sqrt{3}} L^{-1} \left(\sum_{j=0}^{2} \omega^{2kj} |j+\alpha\rangle\right)
\end{align}
Let $p = j+\alpha$, which implies $j = p-\alpha$. Substituting this into the sum:
\begin{align}
    &= \frac{1}{\sqrt{3}} L^{-1} \left(\sum_{p=0}^{2} \omega^{2k(p-\alpha)} |p\rangle\right) \\
    &= \frac{1}{3} \sum_{p=0}^{2} \omega^{2kp-2k\alpha} \left(\sum_{m=0}^{2} \omega^{-pm} |2m\rangle\right) \\
    &= \frac{1}{3} \omega^{-2k\alpha} \sum_{m=0}^{2} \left(\sum_{p=0}^{2} \omega^{p(2k-m)}\right) |2m\rangle
\end{align}
The inner sum over $p$ is non-zero only if $2k-m \equiv 0 \pmod 3$, which implies $m=2k$.
\begin{align}
    &= \omega^{-2k\alpha} |2(2k)\rangle = \omega^{-2k\alpha} |4k\rangle = \omega^{k\alpha} |k\rangle
\end{align}
This is the action of $Z(\alpha)$. Thus, $L^{-1} X(\alpha) L = Z(\alpha)$. The exponent vector $(1,0)$ for $X(\alpha)$ maps to $(0,1)$ for $Z(\alpha)$, confirming the first row of $\overline{L}$ is $(0,1)$.

\subsubsection*{2. Action on $Z(\beta)$}
A similar calculation shows the transformation of $Z(\beta)$:
\begin{align}
    (L^{-1} Z(\beta) L) |k\rangle &= L^{-1} Z(\beta) \left(\frac{1}{\sqrt{3}} \sum_{j=0}^{2} \omega^{2kj} |j\rangle\right) \\
    &= \frac{1}{\sqrt{3}} L^{-1} \left(\sum_{j=0}^{2} \omega^{2kj} \omega^{\beta j} |j\rangle\right) \\
    &= \frac{1}{3} \sum_{j=0}^{2} \omega^{(2k+\beta)j} \left(\sum_{m=0}^{2} \omega^{-jm} |2m\rangle\right) \\
    &= \frac{1}{3} \sum_{m=0}^{2} \left(\sum_{j=0}^{2} \omega^{j(2k+\beta-m)}\right) |2m\rangle
\end{align}
The inner sum over $j$ is non-zero only if $2k+\beta-m \equiv 0 \pmod 3$, which implies $m=2k+\beta$.
\begin{align}
    &= |2(2k+\beta)\rangle = |4k+2\beta\rangle = |k+2\beta\rangle
\end{align}
This is the action of $X(2\beta)$. Thus, $L^{-1} Z(\beta) L = X(2\beta)$. The exponent vector $(0,1)$ for $Z(\beta)$ maps to $(2,0)$ for $X(2\beta)$, confirming the second row of $\overline{L}$ is $(2,0)$.

These two transformations confirm that the operator $L$ realizes the symplectic transformation described by $\overline{L}$.\\
\end{proof}

\begin{theorem}
The qutrit Clifford operator, $R$, is defined by:
\begin{equation}
    R = \frac{1}{\sqrt{3}} \sum_{p,q=0}^{2} \omega^{2q^2-2pq} \ket{p}\bra{q}
\end{equation}
where $\omega = e^{2\pi i/3}$, realizes the symplectic transformation in $\mathbb{F}_3^2$ represented by the matrix:
\begin{equation}
    \overline{R} = 
    \begin{pmatrix}
        0 & 2 \\
        1 & 2
    \end{pmatrix}.
\end{equation}
This corresponds to the mapping of Pauli operators $X(\alpha) \mapsto Z(2\alpha)$ and $Z(\beta) \mapsto X(\beta)Z(2\beta)$, up to a phase factor.
\end{theorem}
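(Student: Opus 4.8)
The plan is to follow the template of the proof of the preceding theorem for $L$: verify the two conjugation identities for $R^{-1}X(\alpha)R$ and $R^{-1}Z(\beta)R$ by a direct computation on a generic basis state $|k\rangle$, and then read the two rows of $\overline{R}$ off the induced exponent maps $(1,0)\mapsto(0,2)$ and $(0,1)\mapsto(1,2)$. First I would record that $R$ is unitary --- it equals $\text{DFT}_3$ composed with the diagonal quadratic-phase gate $P_2$, a factorization I will reuse at the end --- so $R^{-1}=R^{\dagger}=\tfrac{1}{\sqrt 3}\sum_{m,n}\omega^{-2m^{2}+2mn}\,|m\rangle\langle n|$.

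For the $X$-part: compute $R|k\rangle=\tfrac{1}{\sqrt 3}\sum_{p}\omega^{2k^{2}-2pk}|p\rangle$, perform the index shift $p\mapsto p+\alpha$ coming from $X(\alpha)$, apply $R^{-1}$, and collapse the sum over the intermediate index using character orthogonality ($\sum_{p\in\mathbb F_3}\omega^{cp}=3$ if $3\mid c$ and $0$ otherwise). The surviving index is $m=k$, at which point the quadratic phases $\omega^{2k^{2}}$ from $R$ and $\omega^{-2m^{2}}$ from $R^{-1}$ cancel and the net effect is $|k\rangle\mapsto\omega^{2\alpha k}|k\rangle$, i.e. $R^{-1}X(\alpha)R=Z(2\alpha)$; this gives the first row $(0,2)$ of $\overline{R}$. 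For the $Z$-part: the identical manipulation now forces the surviving index to $m=k+\beta$, leaving $|k\rangle\mapsto\omega^{2k^{2}-2(k+\beta)^{2}}|k+\beta\rangle$, which reduces modulo $3$ to $\omega^{2\beta k+\beta^{2}}|k+\beta\rangle$. Recognizing the $Z(2\beta)$-type dependence $\omega^{2\beta(k+\beta)}$, absorbing the leftover into a scalar, and then reordering $Z(2\beta)X(\beta)$ into $X(\beta)Z(2\beta)$ via the nice-error-basis relation $Z(b)X(a)=\omega^{\operatorname{tr}(ab)}X(a)Z(b)$ yields $R^{-1}Z(\beta)R=\omega^{\beta^{2}}X(\beta)Z(2\beta)$. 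Hence $Z(\beta)\mapsto X(\beta)Z(2\beta)$ up to the phase $\omega^{\beta^{2}}$, the second row of $\overline{R}$ is $(1,2)$, and the combined symplectic action is exactly $\overline{R}$.

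The main obstacle is bookkeeping the quadratic Gauss-sum phases: one has to complete the square correctly modulo $3$ (using $2^{-1}\equiv 2$, $-2\equiv 1$, $-4\equiv 2$) and then cleanly separate the $k$-dependent exponent --- which must vanish for $R$ to act as a genuine symplectic map on exponent vectors --- from the $\beta$-only residue $\omega^{\beta^{2}}$ permitted by the statement; this is also the source of the qualifier ``up to a phase factor.'' As an independent check I would verify directly, by comparing matrix entries (using $\tfrac12\equiv 2$ and $3pq\equiv 0\pmod 3$), that $R=\text{DFT}_3\,P_2$; then the homomorphism property of the symplectic representation together with the already-tabulated $\overline{\text{DFT}}$ and $\overline{P}_2=\begin{psmallmatrix}1&2\\0&1\end{psmallmatrix}$ gives $\overline{R}=\overline{\text{DFT}}\,\overline{P}_2=\begin{psmallmatrix}0&2\\1&0\end{psmallmatrix}\begin{psmallmatrix}1&2\\0&1\end{psmallmatrix}=\begin{psmallmatrix}0&2\\1&2\end{psmallmatrix}$, matching the claimed matrix.
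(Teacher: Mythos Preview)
Your proposal is correct and follows essentially the same route as the paper: both compute $R^{-1}X(\alpha)R$ and $R^{-1}Z(\beta)R$ by acting on a generic basis ket, collapse the intermediate sum via character orthogonality, and read off the rows $(0,2)$ and $(1,2)$ of $\overline{R}$ (your phase $\omega^{\beta^{2}}$ agrees with the paper's $\omega^{-2\beta^{2}}$ since $-2\equiv 1\pmod 3$). Your added sanity check via the factorization $R=\mathrm{DFT}_3\,P_2$ and the product $\overline{\mathrm{DFT}}\,\overline{P}_2=\overline{R}$ is not in the paper but is a nice independent confirmation.
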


\begin{proof}
The proof is established by demonstrating how the operator $R$ conjugates the generalized Pauli operators. We compute the action of the transformed operators on an arbitrary basis state $\ket{k}$ using the definitions of the operators. The transformation of the exponent vector $(\alpha, \beta)$ follows the rule $(\alpha', \beta') = (\alpha, \beta) \cdot \overline{R}$.

\subsubsection*{1. Action on $X(\alpha)$}
We compute the action of the conjugated operator on $\ket{k}$ by applying each operator in sequence.

\begin{align}
    \MoveEqLeft (R^{-1}X(\alpha)R)\ket{k} \notag \\
    &= \begin{multlined}[t]
        R^{-1}X(\alpha)
        \left( \frac{1}{\sqrt{3}} \sum_{j=0}^2 \omega^{2k^2-2jk} \ket{j} \right)
       \end{multlined} \\
    &= \frac{1}{\sqrt{3}} \sum_{j=0}^2 \omega^{2k^2-2jk} R^{-1}X(\alpha)\ket{j} \\
    &= \frac{1}{\sqrt{3}} \sum_{j=0}^2 \omega^{2k^2-2jk} R^{-1}\ket{j+\alpha} \\
    &= \begin{multlined}[t]
        \frac{1}{3} \sum_{j=0}^2 \omega^{2k^2-2jk}
        \left( \sum_{m=0}^2 \omega^{-2m^2+2m(j+\alpha)} \ket{m} \right)
       \end{multlined}
\end{align}
Now, we reorder the sums and group the exponents of $\omega$:
\begin{align}
    &= \frac{1}{3} \sum_{m=0}^2 \omega^{2k^2-2m^2+2m\alpha} \left( \sum_{j=0}^2 \omega^{j(-2k+2m)} \right) \ket{m}
\end{align}
The inner sum over $j$ is non-zero only if $-2k+2m \equiv 0 \pmod 3$, which simplifies to $m=k$. This sum evaluates to $3\,\delta_{m,k}$.
\begin{align}
    &= \frac{1}{3} \omega^{2k^2-2k^2+2k\alpha} (3\,\delta_{m,k}) \ket{k} \\
    &= \omega^{2k\alpha}\ket{k} = Z(2\alpha)\ket{k}
\end{align}
Since this holds for any $\ket{k}$, we find $R^{-1}X(\alpha)R = Z(2\alpha)$. The vector $(1,0)$ for $X(\alpha)$ maps to $(0,2)$ for $Z(2\alpha)$, confirming the first row of $\overline{R}$ is $(0,2)$.

\subsubsection*{2. Action on $Z(\beta)$}
Similarly, we compute the transformation of $Z(\beta)$:
\begin{align}
    \MoveEqLeft (R^{-1}Z(\beta)R)\ket{k} \notag \\
    &= \begin{multlined}[t]
        R^{-1}Z(\beta)
        \left( \frac{1}{\sqrt{3}} \sum_{j=0}^2 \omega^{2k^2-2jk} \ket{j} \right)
       \end{multlined} \\
    &= \begin{multlined}[t]
        \frac{1}{\sqrt{3}} \sum_{j=0}^2 \omega^{2k^2-2jk}
        R^{-1} ( \omega^{\beta j}\ket{j} )
       \end{multlined} \\
    &= \begin{multlined}[t]
        \frac{1}{3} \sum_{j=0}^2 \omega^{2k^2-2jk+\beta j}
        \left( \sum_{m=0}^2 \omega^{-2m^2+2mj} \ket{m} \right)
       \end{multlined}
\end{align}
Reordering the sums and grouping exponents:
\begin{align}
    &= \frac{1}{3} \sum_{m=0}^2 \omega^{2k^2-2m^2} \left( \sum_{j=0}^2 \omega^{j(-2k+\beta+2m)} \right) \ket{m}
\end{align}
The inner sum over $j$ yields $3\,\delta_{-2k+\beta+2m, 0}$, which forces $2m = 2k-\beta$, or $m = k-2\beta \equiv k+\beta \pmod 3$.
\begin{align}
    &= \omega^{2k^2 - 2(k+\beta)^2} \ket{k+\beta} \\
    &= \omega^{2k^2 - 2(k^2+2k\beta+\beta^2)} \ket{k+\beta} \\
    &= \omega^{-4k\beta - 2\beta^2} \ket{k+\beta} = \omega^{2k\beta - 2\beta^2} \ket{k+\beta}
\end{align}
This final expression corresponds to the action of $\omega^{-2\beta^2} X(\beta) Z(2\beta)$ on $\ket{k}$. 

Thus, $R^{-1}Z(\beta)R = \omega^{-2\beta^2} X(\beta)Z(2\beta)$. The vector $(0,1)$ for $Z(\beta)$ maps to $(1,2)$ for $X(\beta)Z(2\beta)$, confirming the second row of $\overline{R}$ is $(1,2)$.

These two transformations confirm that the operator $R$ realizes the symplectic transformation described by $\overline{R}$.
\end{proof}

\section{Conclusion}
This work introduced a systematic framework for synthesizing encoder circuits for prime-dimension stabilizer codes by optimizing the underlying generating gate sets. By casting encoder synthesis as shortest-path search on the Cayley graph of $SL(2,\mathbb{F}_d)$ induced by candidate generators, the method delivers encoder implementations that minimize single-qudit gate count and reduce circuit depth while preserving universality over the Clifford group. For qutrit codes, we demonstrated reductions of up to $44\%$ in the single qudit gate count and $42\%$ in depth for the $[[5,1,3]]_3$ code, and $22\%$/$33\%$ for the $[[9,5,3]]_3$ code; for the ququint $[[10,6,3]]_5$ code, we achieved reductions up to $21\%$ in the single qudit gate count and $20\%$ in depth across the evaluated generating gate set sizes. Beyond these empirical gains, we provided explicit operator constructions (e.g., $L$ and $R$) and proofs establishing the isomorphism between the action of our generators and $SL(2,\mathbb{F}_3)$/$SL(2,\mathbb{F}_5)$, enabling direct compilation to efficient Clifford sequences.

The practical implications are twofold. First, the reduced depth alleviates coherence-time pressure in near-term qudit platforms, improving the viability of non-binary QEC encoders. Second, the framework is compiler-friendly: once hardware-native primitives are mapped to their symplectic representations, the same search-and-score machinery can target alternative cost models without altering code semantics.

This study also opens several avenues for future work. A hardware-aware cost model that accounts for asymmetric native gate times, calibration overheads, and crosstalk can further sharpen depth and fidelity. Extending the framework beyond prime $d$ to prime-power dimensions and constrained native gate sets will broaden applicability. Finally, integrating additional fault-tolerant constraints (e.g., transversality requirements and magic-state overheads), exploring heuristic/ILP-guided search to complement exhaustive enumeration, and validating the synthesized encoders on superconducting, photonic, and ion-trap qudit platforms are natural next steps toward deployable non-binary QEC compilers.

{
\bibliography{papers}}

\end{document}